\newtheorem{theorem}{Theorem}[section]
\newtheorem{definition}[theorem]{Definition}
\newtheorem{proposition}[theorem]{Proposition}
\title[Draft of Communicating Concurrent Processes]
      {Communicating Concurrent Processes}
\author[Yong Wang]
    {Yong Wang\\
     College of Computer Science and Technology,\\
     Faculty of Information Technology,\\
     Beijing University of Technology, Beijing, China\\
     }
\begin{document}
\label{firstpage}

\makecorrespond

\maketitle

\begin{abstract}
Process algebra CSP only permits a process to engage in one event on a moment and records this single event into the traces of the process. CSP cannot process events simultaneously, it treat the events occurred simultaneously as one single event. We modify CSP to process the events occurred simultaneously, which is called communicating concurrent processes (CCP).
\end{abstract}

\begin{keywords}
Process algebra; Communicating sequential processes; Traces
\end{keywords}

\section{Introduction}\label{int}

Process algebras are well-known formal theories to capture computational concepts in computer science, especially parallelism and concurrency. CCS \cite{CCS}, CSP \cite{CSP} and ACP \cite{ACP} are three process algebras, and CSP is used widely in verify the behaviors of computer systems for its rich expressive power.

CSP only permits a process to engage in one event on a moment and records this single event into the traces of the process. CSP cannot process events simultaneously, it treat the events occurred simultaneously as one single event. In this paper, we modify CSP to process the events occurred simultaneously, this work is called communicating concurrent processes (CCP). The main difference between CCP and CSP is the treatment of concurrency.

This paper is organized as follows. In section \ref{pct}, we introduce the concepts and laws of processes and concurrent traces in CCP. In section \ref{concurrency}, we give the concepts and laws of concurrency in CCP. we discuss the rest contents of CCP corresponding to CSP briefly, for they are almost same. And we conclude this paper in section \ref{con}.

\section{Processes and Concurrent Traces}\label{pct}

In this section, we modify processes and traces to be suitable for CCP. Some concepts are modified for this situation, the others are just retype from CSP, including concepts of processes and traces.

\subsection{Processes}

The intuitions of the concept events, event names, alphabet, processes, are the same as CSP. We also use the following conventions:

\begin{itemize}
  \item Words in lower-case letters denote distinct events;
  \item Words in upper-case letters denote specific defined processes;
  \item The letters $x,y,z$ are variables denoting events;
  \item The letters $A,B,C$ denote sets of events;
  \item The letters $X,Y$ are variables denoting processes;
  \item The alphabet of process $P$ is denoted $\alpha P$;
  \item The process with alphabet $A$ which never actually engages in any of the events of $A$ is called $STOP_A$;
  \item The process with alphabet $A$ which can engage in any event of $A$ is called $RUN_A$.
\end{itemize}

\begin{definition}[Prefix]
Let $x_1,\cdots,x_n$ ($n\in\mathbb{N}$) be events and let $P$ be a process. Then the prefix

\[(\{x_1,\cdots,x_n\}\rightarrow P)\]

denotes a process engaged in the events $x_1,\cdots,x_n$ simultaneously firstly, and then behaving as $P$, with $\alpha(\{x_1,\cdots,x_n\}\rightarrow P)=\alpha P$, if $x_1,\cdots, x_n\in \alpha P$.
\end{definition}

To describe infinite behaviors, we also introduce recursion.

\begin{definition}[Recursion]
For a prefix guarded expression $F(X)$ containing the the process name $X$, with the alphabet $A$ of $X$, then the recursive equation

\[X=F(X)\]

has a unique solution denoted as $\mu X:A\bullet F(X)$.
\end{definition}

\begin{definition}[Choice]
If $\{x_1,\cdots, x_n\}\neq\{y_1,\cdots,y_m\}$ with $m,n\in\mathbb{N}$, then the choice

\[(\{x_1,\cdots,x_n\}\rightarrow P \mid \{y_1,\cdots,y_m\}\rightarrow Q)\]

will initially engage in either of the events $\{x_1,\cdots,x_n\}$ or $\{y_1,\cdots, y_m\}$ simultaneously, with $\alpha(\{x_1,\cdots,x_n\}\rightarrow P \mid \{y_1,\cdots,y_m\}\rightarrow Q)=\alpha P = \alpha Q$.
\end{definition}

\begin{proposition}[Laws of Processes]\label{LoP}
We have the following laws of processes.

\begin{itemize}
  \item L1. $(\{x_1,\cdots,x_n\}:A\rightarrow P(x_1,\cdots,x_n))=(\{y_1,\cdots,y_m\}: B\rightarrow Q(y_1,\cdots,y_m))\equiv(m=n\wedge A=B\wedge\forall x_i:A\bullet P(x_i)=Q(x_i))$
      \begin{itemize}
        \item L1A. $STOP\neq(\{d_1,\cdots, d_n\}\rightarrow P)$
        \item L1B. $(\{c_1,\cdots,c_n\}\rightarrow P)\neq(\{d_1,\cdots,d_m\}\rightarrow Q)$ if $\{c_1,\cdots,c_n\}\neq\{d_1,\cdots,d_m\}$
        \item L1C. $(\{c_1,\cdots,c_n\}\rightarrow P\mid\{d_1,\cdots,d_m\}\rightarrow Q)=(\{d_1,\cdots,d_m\}\rightarrow Q\mid \{c_1,\cdots,c_n\}\rightarrow P)$
        \item L1D. $(\{c_1,\cdots,c_n\}\rightarrow P)=(\{c_1,\cdots,c_n\}\rightarrow Q)\equiv P=Q$
      \end{itemize}
  \item L2. If $F(X)$ is a guarded expression, $(Y=F(Y))\equiv(Y=\mu X\bullet F(X))$.
  \begin{itemize}
    \item L2A. $\mu X\bullet F(X)=F(\mu X\bullet F(X))$
  \end{itemize}
\end{itemize}
\end{proposition}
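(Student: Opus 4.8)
The plan is to work in the trace model for CCP, in which a process $P$ is determined by its alphabet $\alpha P$ together with its set of traces $\mathrm{traces}(P)$, a set of finite sequences of simultaneous event-sets; consequently, to prove two processes equal it suffices to check that they have the same alphabet and the same traces, and to prove them unequal it suffices to exhibit a trace of one that is not a trace of the other. First I would record the trace equations for the two constructs involved. For the prefix, $\mathrm{traces}(\{x_1,\cdots,x_n\}\rightarrow P)$ is the empty trace $\langle\rangle$ together with every trace obtained by prepending the single step $\{x_1,\cdots,x_n\}$ to a trace of $P$, and $\alpha(\{x_1,\cdots,x_n\}\rightarrow P)=\alpha P$; for the menu choice the trace set is the union of the trace sets of the alternatives, with their common alphabet. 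Both equations are immediate from the Prefix and Choice definitions.

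Given these equations, the L1-family is a short calculation. For L1A, $STOP$ has only the trace $\langle\rangle$, while $\{d_1,\cdots,d_n\}\rightarrow P$ also has the length-one trace $\langle\{d_1,\cdots,d_n\}\rangle$, so the two are distinct. For L1B, if $\{c_1,\cdots,c_n\}\neq\{d_1,\cdots,d_m\}$ then the two processes have different sets of length-one traces, hence are distinct. L1C is the observation that the menu choice is defined symmetrically in its two alternatives — the bar $\mid$ denotes an unordered pair of guarded alternatives — so both sides have literally the same alphabet and the same (union of) trace sets, and the side condition $\{x_1,\cdots,x_n\}\neq\{y_1,\cdots,y_m\}$ needed to make the choice well-formed is itself symmetric. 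For L1D, prefixing is injective in its continuation: stripping the leading step $\{c_1,\cdots,c_n\}$ from each nonempty trace is a bijection between the nonempty traces of the two sides, so equal prefix processes force $P=Q$, and the converse follows by substitution. The full L1 then combines these: matching the menus of the two parametrised prefixes forces $m=n$ and $A=B$, after which L1D applied for each $x_i\in A$ gives $P(x_i)=Q(x_i)$, and conversely those conditions make the two trace equations coincide. Throughout I would keep careful track of the alphabet side-conditions so that "same traces" really does upgrade to "equal process".

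For L2 and L2A I would appeal directly to the Recursion definition: since $F(X)$ is prefix-guarded, the equation $X=F(X)$ has a \emph{unique} solution, denoted $\mu X\bullet F(X)$. Hence $Y=F(Y)$ says exactly that $Y$ is a solution of $X=F(X)$, and by uniqueness this is equivalent to $Y=\mu X\bullet F(X)$, which is L2. For L2A, $\mu X\bullet F(X)$ is by definition a solution of $X=F(X)$, i.e.\ substituting it for $X$ on both sides yields an identity, which is precisely $\mu X\bullet F(X)=F(\mu X\bullet F(X))$.

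The only genuinely non-routine ingredients are hidden in two places: that the trace model is a faithful invariant for process equality on this fragment (legitimising the calculations above), and that a prefix-guarded recursive equation really has a unique solution — the latter being the substantive claim already asserted in the Recursion definition. A self-contained proof of that uniqueness is the step I expect to be the main obstacle; the standard route is to realise processes as trees of traces carrying the usual ultrametric (distance $2^{-k}$ when the trace sets first differ at length $k{+}1$), note that guardedness makes $F$ a strict contraction, and invoke the Banach fixed-point theorem — equivalently, to show that the chain $\mathrm{traces}(STOP)\subseteq\mathrm{traces}(F(STOP))\subseteq\mathrm{traces}(F^2(STOP))\subseteq\cdots$ has a limit that is the unique fixpoint. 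Since the Recursion definition already stipulates uniqueness, I would cite it and keep the proof of the Proposition at the level of the trace-equation bookkeeping described above.
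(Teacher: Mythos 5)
Your proposal is correct, but it is considerably more explicit than what the paper actually does: the paper's entire proof is the one-line remark that the laws ``can be proven straightforwardly from the related definitions of processes,'' with the semantic underpinning deferred to the later subsection on the mathematical theory of deterministic processes (which simply adopts CSP's denotational model and Scott's fixed-point argument by reference). Your route --- identifying a deterministic process with its alphabet together with its set of concurrent traces, writing down the trace equations for prefix and choice, and then settling L1A--L1D and L1 by comparing trace sets (distinctness via the length-one traces, symmetry of the menu for L1C, injectivity of prefixing for L1D) --- is exactly the standard CSP-style justification that the paper leaves implicit, so in substance it fills in the gap rather than contradicts it. Two points are worth flagging. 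First, your whole calculation rests on process equality being alphabet-plus-trace equality; this is legitimate here only because the paper stipulates that the denotational semantics of the deterministic fragment coincide with CSP's, so you should state that you are working in that model rather than with the informal definitions alone. Second, for L2 and L2A you correctly observe that the real content is the existence and uniqueness of the fixed point, which the Recursion definition merely asserts; the paper discharges this by pointing to Scott's fixed-point theory as in CSP, whereas you sketch the metric/Banach alternative (or the chain $\mathrm{traces}(F^n(STOP_A))$, which matches the paper's law L5 for $traces(\mu X{:}A\bullet F(X))$) --- either is fine, but since you ultimately cite the stipulated uniqueness rather than proving it, your treatment of L2 is at the same level of deferral as the paper's. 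In short: correct, same underlying approach, with your version supplying the trace-level bookkeeping the paper omits.
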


\begin{proof}
These laws can be proven straightforwardly from the related definitions of processes.
\end{proof}

The processes also can be implemented by LISP as in CSP.

\subsection{Concurrent Traces}

\begin{definition}[Concurrent Traces]
A trace is a sequence of symbols, separated by commas and enclosed in angular brackets. A concurrent trace is a trace within a pair of commas, there may be a set of symbols.

\begin{itemize}
  \item $\langle\{x_1,\cdots,x_n\},\{y_1,\cdots,y_m\}\rangle$ is a concurrent trace contains two event sets, $\{x_1,\cdots,x_n\}$ followed by $\{y_1,\cdots,y_m\}$;
  \item $\langle\rangle$ is the empty trace.
\end{itemize}

In the following, we also call a concurrent trace as a trace. And we write $s,t,u$ for traces, $S,T,U$ for sets of traces, and $f,g,h$ for functions.
\end{definition}

\begin{definition}[Catenation]
The catenation of a pair of traces $s$ and $t$ is defined as $s\smallfrown t = \langle s, t\rangle$.
\end{definition}

\begin{proposition}[Laws of Catenation]
The laws of catenation for concurrent traces are the same for traces in CSP, we retype them as follows.
\begin{itemize}
  \item L1. $s\smallfrown\langle\rangle = \langle\rangle\smallfrown s = s$
  \item L2. $s\smallfrown(t\smallfrown u) = (s\smallfrown t)\smallfrown u$
  \item L3. $s\smallfrown t = s\smallfrown u\equiv t = u$
  \item L4. $s\smallfrown t = u\smallfrown t\equiv s=u$
  \item L5. $s\smallfrown t=\langle\rangle\equiv s=\langle\rangle\wedge t=\langle\rangle$
  \item L6. For $t^n$ is $n$ copies of $t$ catenated with each other, $t^0=\langle\rangle$
  \item L7. $t^{n+1}=t\smallfrown t^n$
  \item L8. $t^{n+1}=t^n\smallfrown t$
  \item L9. $(s\smallfrown t)^{n+1}=s\smallfrown(t\smallfrown s)^n\smallfrown t$
\end{itemize}
\end{proposition}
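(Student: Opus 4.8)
The plan is to reduce everything to a single observation: once the bracket notation is read correctly, a concurrent trace is just a finite sequence whose entries are finite sets of events, and $s\smallfrown t$ is the operation that appends the entries of $t$ after the entries of $s$. Under this reading the concurrent traces, with $\smallfrown$ as multiplication and $\langle\rangle$ as unit, form the free monoid on the set of event-subsets, and L1--L9 are precisely the identities that hold in every free monoid. So the first step is to make the definition $s\smallfrown t=\langle s,t\rangle$ precise by interpreting $\langle s,t\rangle$ as the \emph{flattened} concatenation of the two underlying sequences (so that, e.g., $\langle\langle\{a\}\rangle,\langle\{b\}\rangle\rangle=\langle\{a\},\{b\}\rangle$), with $\langle\rangle$ the empty sequence. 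This is the only genuinely delicate point; everything afterwards is routine.

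Granting that, I would discharge the laws in order. L1 is immediate: flattening against the empty sequence on either side returns the same sequence. L2 is associativity of sequence concatenation, proved by induction on the length (number of event-sets) of the first argument $s$, the base case $s=\langle\rangle$ being L1. L5 is a length count: writing $|u|$ for the number of entries of $u$, we have $|s\smallfrown t|=|s|+|t|$, so a catenation is empty exactly when both factors are. For the cancellation laws L3 and L4 I would induct on the length of the cancelled trace: for L3, if $s\smallfrown t=s\smallfrown u$ then induct on $|s|$, using L1 when $|s|=0$ and, when $s=\langle\{x_1,\ldots,x_k\}\rangle\smallfrown s'$, comparing the leading entries and reducing to $s'\smallfrown t=s'\smallfrown u$; L4 is the mirror image, peeling entries off the right. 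The reverse implications in L3--L5 are trivial congruences of equality.

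Finally, L6 and L7 are simply the recursive \emph{definition} of $t^n$, namely $t^0=\langle\rangle$ and $t^{n+1}=t\smallfrown t^n$, so they hold by stipulation. L8, $t^{n+1}=t^n\smallfrown t$, follows by induction on $n$ from L1 and L2. L9, $(s\smallfrown t)^{n+1}=s\smallfrown(t\smallfrown s)^n\smallfrown t$, also goes by induction on $n$: the base case $n=0$ reads $s\smallfrown t=s\smallfrown\langle\rangle\smallfrown t$, which is L1, and in the inductive step one rewrites $(s\smallfrown t)^{n+2}=(s\smallfrown t)\smallfrown(s\smallfrown t)^{n+1}$, applies the hypothesis, and reassociates the adjacent block $t\smallfrown s$ into place using L2.

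I expect the main obstacle to be exactly the first step, since it is the only place where ``concurrent'' traces differ in flavour from ordinary CSP traces: one must give $\langle s,t\rangle$ a meaning that makes $\smallfrown$ a total, associative operation with unit $\langle\rangle$ while still treating each event-set entry as a single indivisible concurrent event. A purely literal nesting reading of $\langle s,t\rangle$ would already break L2, so the flattening convention has to be fixed first. Once the free-monoid picture is in place, the paper's assertion that these laws are ``the same for traces in CSP'' is justified, because none of L1--L9 inspects the internal set structure of an entry.
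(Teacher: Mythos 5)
Your proposal is correct, but it does considerably more than the paper does: the paper's entire ``proof'' of this proposition is the one-line assertion that the laws follow straightforwardly from the definition of catenation, with no argument given. Your blueprint supplies exactly the missing content --- the free-monoid reading of concurrent traces as finite sequences of event-sets, L1/L2 by induction on length, L3/L4 by cancellation, L5 by a length count, L6/L7 as the recursive definition of $t^n$, and L8/L9 by induction on $n$ --- and all of these steps are sound. More importantly, you correctly identify the one point where the paper's ``straightforwardly from the definition'' claim cannot be taken at face value: the definition as written, $s\smallfrown t=\langle s,t\rangle$, read literally as a nested two-element sequence, falsifies L1 and L2 outright (e.g.\ $s\smallfrown\langle\rangle$ would be $\langle s,\langle\rangle\rangle\neq s$), so the flattening convention you impose is not a cosmetic choice but a necessary repair of the definition before any of L1--L9 can be proved. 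In short, where the paper appeals to the analogy with CSP traces, you make that analogy precise and show it is what actually carries the proof; the only caveat is that your argument proves the laws for the repaired definition rather than for the definition as literally stated in the paper.
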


\begin{proof}
These laws can be proven straightforwardly from the related definition of catenation.
\end{proof}

\begin{definition}[Restriction]
The restriction $(t\upharpoonright A)$ of the trace $t$ restricted to the set of symbols $A$, is obtained by omitting all symbols outside $A$ from $t$.
\end{definition}

\begin{proposition}[Laws of Restriction]
The laws of restriction for concurrent traces are the same as for traces in CSP, we also retype them as follows.
\begin{itemize}
  \item L1. $\langle\rangle\upharpoonright A=\langle\rangle$
  \item L2. $(s\smallfrown t)\upharpoonright A=(s\upharpoonright A)\smallfrown (t\upharpoonright A)$
  \item L3. $\langle \{x_1,\cdots,x_n\}\rangle\upharpoonright A = \langle\{x_1,\cdots,x_n\}$, if $x_1,\cdots,x_n\in A$
  \item L4. $\langle\{y_1,\cdots, y_m\}\rangle\upharpoonright A=\langle\rangle$, if $y_1,\cdots, y_m\notin A$
  \item L5. $s\upharpoonright\{\}=\langle\rangle$
  \item L6. $(s\upharpoonright A)\upharpoonright B = s\upharpoonright(A\cup B)$
\end{itemize}
\end{proposition}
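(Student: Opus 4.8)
The plan is to funnel all six laws through a single element-wise description of restriction and then read each law off as an immediate consequence. Write a concurrent trace in the form $t=\langle E_1,\ldots,E_k\rangle$, where $E_i$ is the set of events occurring in the $i$-th concurrent bracket and $k=0$ gives the empty trace $\langle\rangle$. The first step is to prove the normal-form lemma
\[
\langle E_1,\ldots,E_k\rangle\upharpoonright A \;=\; \langle F_1,\ldots,F_j\rangle ,
\]
where $F_1,\ldots,F_j$ is the sequence obtained from $E_1\cap A,\ldots,E_k\cap A$ by deleting every set that has become empty. This is exactly what ``omitting all symbols outside $A$'' says once it is read as acting \emph{inside} each bracket, the empty-bracket deletion being the natural reading of the case covered by L4. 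All the remaining work is bookkeeping against this lemma.

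Then L1 is the case $k=0$; L3 is $k=1$ with $E_1\subseteq A$, so $E_1\cap A=E_1$ and nothing is deleted; L4 is $k=1$ with $E_1\cap A=\{\}$, so the one bracket disappears; and L5 is the case $A=\{\}$, where every $E_i\cap\{\}$ is empty and so every bracket disappears. For L2, observe that the bracket-sequence of $s\smallfrown t$ is that of $s$ followed by that of $t$, while the normal-form operation treats each bracket independently of the others, so it distributes over catenation; alternatively one proves L2 directly by structural induction on $s$, using L1 and L3/L4 as the base cases. Finally, given L2 it suffices to verify L6 on a single bracket $\langle E\rangle$: here $(\langle E\rangle\upharpoonright A)\upharpoonright B=\langle E\cap A\rangle\upharpoonright B$, which is $\langle (E\cap A)\cap B\rangle$ or $\langle\rangle$ according as $(E\cap A)\cap B$ is non-empty or empty, and, since $(E\cap A)\cap B=E\cap(A\cap B)$ by associativity of intersection and the two sides are empty under the same condition, this coincides with $\langle E\rangle$ restricted to the intersection of $A$ and $B$.

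The one place that genuinely needs care is not a calculation but the reading of the informal definitions: restriction is stated as ``omitting symbols'', not as mapping brackets to intersections, and catenation is given at the level of whole traces via $s\smallfrown t=\langle s,t\rangle$ rather than at the level of bracket-sequences, so one must first check that restriction is well defined on concurrent traces and that it commutes with this flattening. Once the normal-form lemma is in place — and in particular once the empty-bracket deletion convention is fixed so that L4 and L5 come out as stated — the six laws each follow in a single line; indeed they are strictly weaker than the lemma, since L3 and L4 only address the homogeneous cases $E\subseteq A$ and $E\cap A=\{\}$ and say nothing about a bracket that $A$ splits. Accordingly I would present the normal-form lemma first and list L1--L6 as corollaries of it.
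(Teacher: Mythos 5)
Your overall strategy --- reducing everything to a normal-form lemma that restriction acts inside each bracket by intersection with $A$ and deletes brackets that become empty, then reading L1--L6 off as corollaries --- is sound, and it is considerably more explicit than the paper's own proof, which consists solely of the assertion that the laws ``can be proven straightforwardly from the related definition of restriction.'' The points you insist on (fixing the empty-bracket deletion convention, checking compatibility with the flattening implicit in $s\smallfrown t=\langle s,t\rangle$, noting that L3 and L4 only cover the homogeneous cases) are exactly the things the paper leaves implicit, and L1--L5 do follow from your lemma in one line each, as you say.

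There is, however, a genuine mismatch on L6. The statement you were asked to prove reads $(s\upharpoonright A)\upharpoonright B = s\upharpoonright(A\cup B)$, with a union, whereas your single-bracket computation produces $\langle (E\cap A)\cap B\rangle$, i.e.\ the law with $A\cap B$. The union version is simply false: take $s=\langle\{x\}\rangle$, $A=\{\}$, $B=\{x\}$; then $(s\upharpoonright A)\upharpoonright B=\langle\rangle\upharpoonright B=\langle\rangle$ by L5 and L1, while $s\upharpoonright(A\cup B)=\langle\{x\}\rangle\upharpoonright\{x\}=\langle\{x\}\rangle$ by L3. The intersection form is the law as it stands in Hoare's CSP, so the $\cup$ is almost certainly a transcription error in the paper (as is the unclosed bracket in L3), but your write-up neither proves L6 as printed nor flags the discrepancy --- it silently proves a different, correct identity. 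To be complete you should state explicitly that L6 as printed fails, give the counterexample, and present your argument as a proof of the corrected law $(s\upharpoonright A)\upharpoonright B = s\upharpoonright(A\cap B)$; with that emendation your proof stands.
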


\begin{proof}
These laws can be proven straightforwardly from the related definition of restriction.
\end{proof}

\begin{definition}[Head and Tail]
For a nonempty trace $s$, the head of $s$ is the first sequence denoted $s_0$, the left sequence is the tail of $s$ denoted $s'$.
\end{definition}

\begin{proposition}[Laws of Head and Tail]
The laws of head and tail for concurrent traces are as follows.
\begin{itemize}
  \item L1. $(\langle\{x_1,\cdots,x_n\}\rangle\smallfrown s)_0=\{x_1,\cdots,x_n\}$
  \item L2. $(\langle\{x_1,\cdots,x_n\}\rangle\smallfrown s)'=s$
  \item L3. $s=(\langle s_0\rangle\smallfrown s')$, if $s\neq\langle\rangle$
  \item L4. $s=t\equiv(s=t=\langle\rangle\vee(s_0=t_0\wedge s'=t'))$
\end{itemize}
\end{proposition}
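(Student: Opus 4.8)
The plan is to prove the four laws of head and tail directly from the definitions of catenation, the empty trace, and head/tail, in the order L1, L2, L3, L4, since L4 will depend on the first three. Throughout I treat a nonempty trace $s$ as a finite sequence $\langle s_0, s_1, \ldots, s_k\rangle$ of event sets, where the head $s_0$ is the first element and the tail $s'$ is $\langle s_1, \ldots, s_k\rangle$ (empty if $s$ has length one); this is exactly the content of the Head and Tail definition.

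For L1 and L2, I would unfold the catenation $\langle\{x_1,\ldots,x_n\}\rangle \smallfrown s$ using the defining equation $u \smallfrown v = \langle u, v\rangle$ together with the associativity and empty-trace laws L1, L2 of Laws of Catenation, so that the left-hand side is the sequence whose first element is $\{x_1,\ldots,x_n\}$ followed by the elements of $s$. Applying the definition of head then gives its head is $\{x_1,\ldots,x_n\}$ (that is L1), and applying the definition of tail gives its tail is precisely $s$ (that is L2). For L3, I would argue that any nonempty trace $s$ has a first element $s_0$ and a remainder $s'$, and that re-catenating $\langle s_0\rangle \smallfrown s'$ reconstructs $s$; this is essentially the statement that ``first element followed by the rest equals the whole'', which follows by combining the definition of head, the definition of tail, and L1, L2 of catenation. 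One should note the degenerate subcase where $s$ has length one, so $s' = \langle\rangle$, and then L3 reduces to $s = \langle s_0\rangle \smallfrown \langle\rangle = \langle s_0\rangle$ by L1 of catenation.

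For L4, the forward direction $s = t \Rightarrow (s=t=\langle\rangle \vee (s_0 = t_0 \wedge s' = t'))$ is immediate: if $s=t$ then either both are empty, or both are nonempty and then head and tail are functions of the trace so $s_0=t_0$ and $s'=t'$. For the reverse direction, if $s = t = \langle\rangle$ we are done; otherwise $s_0 = t_0$ and $s' = t'$, and both $s,t$ must be nonempty (a head exists only for a nonempty trace), so by L3 we have $s = \langle s_0\rangle \smallfrown s' = \langle t_0\rangle \smallfrown t' = t$. I would also record, if needed, that $s=\langle\rangle \Leftrightarrow t=\langle\rangle$ under these hypotheses, to make the case split clean.

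The main obstacle is not any hard calculation but rather the informal framing of ``trace within a pair of commas'' and the notation $s \smallfrown t = \langle s, t\rangle$ in the Catenation definition, which conflates the flattening of sequences with nesting; to make L1--L4 rigorous I must commit to the reading in which $\langle\cdot,\cdot\rangle$ denotes concatenation of sequences of event sets (justified by L1--L5 of Laws of Catenation), so that a single-element trace $\langle\{x_1,\ldots,x_n\}\rangle$ has head $\{x_1,\ldots,x_n\}$ and tail $\langle\rangle$. Once that convention is fixed, each law reduces to a one-line appeal to the definitions, exactly as the paper's preceding proofs assert.
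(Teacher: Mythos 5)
Your proof is correct and follows essentially the same route as the paper, which simply asserts that L1--L4 follow straightforwardly from the definitions of head, tail, and catenation; you carry out exactly that unfolding, just in more detail (including the useful observation that the catenation notation $s\smallfrown t=\langle s,t\rangle$ must be read as flattening rather than nesting). No gap; your version is a faithful, more explicit elaboration of the paper's one-line argument.
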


\begin{proof}
These laws can be proven straightforwardly from the related definition of head and tail.
\end{proof}

\begin{definition}[Star]
The star $A^\ast$ of $A$ is defined as follows. $A^\ast=\{s|s\upharpoonright A = s\}$
\end{definition}

\begin{proposition}[Laws of Star]
The laws of star for concurrent traces are as follows.
\begin{itemize}
  \item L1. $\langle\rangle\in A^\ast$
  \item L2. $\langle\{x_1,\cdots,x_n\}\rangle\in A^\ast\equiv\{x_1,\cdots,x_n\}\in A$
  \item L3. $(s\smallfrown t)\in A^\ast\equiv s\in A^\ast\wedge t\in A^\ast$
  \item L4. $A^\ast=\{t|t=\langle\rangle\vee(t_0\in A\wedge t'\in A^\ast)\}$
\end{itemize}
\end{proposition}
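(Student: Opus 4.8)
The plan is to derive all four laws directly from the definition $A^\ast = \{s \mid s\upharpoonright A = s\}$, leaning on the previously established Laws of Restriction and Laws of Head and Tail. I would handle L1--L3 first and then obtain L4 as a corollary of them together with Head/Tail L3.

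Law L1 is immediate: Restriction L1 gives $\langle\rangle\upharpoonright A = \langle\rangle$, which is exactly the membership condition for $\langle\rangle\in A^\ast$. For L2, read $\{x_1,\dots,x_n\}\in A$ as $x_1,\dots,x_n\in A$; then the forward direction follows from Restriction L3, since $\langle\{x_1,\dots,x_n\}\rangle\upharpoonright A = \langle\{x_1,\dots,x_n\}\rangle$ puts the trace in $A^\ast$. For the contrapositive, if some $x_i\notin A$ then restricting the single-element trace either empties it (Restriction L4, when all $x_i\notin A$) or yields the strictly smaller set $\{x_j : x_j\in A\}$ (by the definition of restriction as deletion of symbols outside $A$); in either case the restricted trace differs from $\langle\{x_1,\dots,x_n\}\rangle$, so it is not fixed by $\upharpoonright A$ and hence not in $A^\ast$.

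For L3, the forward direction is a one-line use of Restriction L2: if $s,t\in A^\ast$ then $(s\smallfrown t)\upharpoonright A = (s\upharpoonright A)\smallfrown(t\upharpoonright A) = s\smallfrown t$. The delicate point — which I expect to be the main obstacle — is the converse: from $(s\upharpoonright A)\smallfrown(t\upharpoonright A) = s\smallfrown t$ I must conclude $s\upharpoonright A = s$ and $t\upharpoonright A = t$ separately, and the catenation cancellation laws L3, L4 do not apply directly because neither factor is a priori known to match. I would close this with a monotonicity-of-length argument: restriction never lengthens a trace, so $|s\upharpoonright A|\le |s|$ and $|t\upharpoonright A|\le |t|$, while the displayed identity forces $|s\upharpoonright A| + |t\upharpoonright A| = |s| + |t|$; hence both inequalities are equalities, restriction deletes nothing from either factor, and the componentwise identities follow (one may then re-derive them cleanly via catenation L3/L4, whose hypotheses now hold). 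If a length function is not treated as primitive here, the same conclusion can be reached by structural induction on $s$ using the Laws of Head and Tail, and that is the fallback route I would take.

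Finally, for L4 I would prove the two inclusions. If $t\in A^\ast$ and $t\neq\langle\rangle$, then Head/Tail L3 gives $t = \langle t_0\rangle\smallfrown t'$; applying L3 (star) yields $\langle t_0\rangle\in A^\ast$ and $t'\in A^\ast$, and then L2 gives $t_0\in A$, so $t$ lies in the right-hand set. Conversely, if $t=\langle\rangle$ then L1 places it in $A^\ast$; and if $t_0\in A$ with $t'\in A^\ast$, then L2 gives $\langle t_0\rangle\in A^\ast$, and L3 (star) applied to $t = \langle t_0\rangle\smallfrown t'$ gives $t\in A^\ast$. This completes the proof, the only non-routine ingredient being the length/induction lemma flagged in the previous paragraph.
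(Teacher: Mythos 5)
Your proposal is correct and follows the same route the paper intends: the paper's proof is just the one-line assertion that the laws ``can be proven straightforwardly from the related definition of star,'' and your argument fills this in by deriving L1--L3 from the definition $A^\ast=\{s\mid s\upharpoonright A=s\}$ together with the Laws of Restriction, and then L4 from L1--L3 plus Head/Tail L3. Your extra care on the converse of L3 (the length/induction argument, noting that equal total length plus the catenation identity forces componentwise equality) goes beyond anything the paper records, but it is consistent with, and a sound completion of, the paper's ``straightforward'' claim.
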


\begin{proof}
These laws can be proven straightforwardly from the related definition of star.
\end{proof}

\begin{definition}[Ordering]
For traces $s,u,t$, such that $s\smallfrown u = t$, The ordering relation of concurrent traces is defined as follows. $s\leq t = (\exists u\bullet s\smallfrown u = t)$.
\end{definition}

\begin{proposition}[Laws of Ordering]
The laws of the ordering relation for concurrent traces are the same for traces in CSP, we retype them as follows.
\begin{itemize}
  \item L1. $\langle\rangle\leq s$
  \item L2. $s\leq s$
  \item L3. $s\leq t\wedge t\leq s\Rightarrow s=t$
  \item L4. $s\leq t\wedge t\leq u\Rightarrow s\leq u$
  \item L5. $(\langle\{x_1,\cdots,x_n\}\rangle\smallfrown s)\leq t\equiv t\neq\langle\rangle \wedge \{x_1,\cdots,x_n\}=t_0\wedge s\leq t'$
  \item L6. $s\leq u\wedge t\leq u\Rightarrow s\leq t\vee t\leq s$
  \item L7. $s$ in $t=(\exists u,v\bullet t=u\smallfrown s\smallfrown v)$
  \item L8. $(\langle\{x_1,\cdots,x_n\}\rangle\smallfrown s)$ in $t\equiv t\neq\langle\rangle\wedge((t_0=\{x_1,\cdots,x_n\}\wedge s\leq t')\vee(\langle\{x_1,\cdots,x_n\}\rangle\smallfrown s$ in $t'))$
  \item L9. $s\leq t\Rightarrow (s\upharpoonright A)\leq(t\upharpoonright A)$
  \item L10. $t\leq u\Rightarrow(s\smallfrown t)\leq(s\smallfrown u)$
\end{itemize}
\end{proposition}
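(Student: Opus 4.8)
The plan is to derive each of L1--L10 from the laws already established for catenation, restriction, and head and tail, mirroring the corresponding derivations for ordinary traces in CSP. The key observation that makes this transfer work is that the ordering relation only ever inspects the \emph{head} of a trace, and a head is now a whole event set $\{x_1,\cdots,x_n\}$ treated atomically; so set equality replaces symbol equality, but the logical structure of every argument is unchanged.

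First I would dispatch the laws that need no induction. For L1 take the witness $u=s$ and invoke Catenation L1; for L2 take $u=\langle\rangle$ and again Catenation L1; for L4 and L10 catenate the two witnessing traces and use associativity (Catenation L2); for L9 restrict the witnessing equation $s\smallfrown u=t$ on $A$ and use Restriction L2 to get $(s\upharpoonright A)\smallfrown(u\upharpoonright A)=t\upharpoonright A$. L3 follows by substituting $t\smallfrown v=s$ into $s\smallfrown u=t$, cancelling $s$ on the left (Catenation L3) to obtain $u\smallfrown v=\langle\rangle$, and then applying Catenation L5 to conclude $u=v=\langle\rangle$, whence $t=s\smallfrown\langle\rangle=s$. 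L5 is proved by splitting the biconditional: the forward direction reads off $t_0$ and $t'$ from the witnessing equation via Head/Tail L1 and L2 (and $t\neq\langle\rangle$ from Catenation L5), while the backward direction reassembles $t=\langle t_0\rangle\smallfrown t'$ using Head/Tail L3. I would treat L7 as the defining equation of the \emph{in} relation rather than as something to be proved.

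The two statements that genuinely require induction are L6 and L8. For L6 I would induct on the length of $s$ (equivalently, repeatedly apply Head/Tail L3): if $s=\langle\rangle$ we are done by L1; if $t=\langle\rangle$ then $t\leq s$ by L1; otherwise $s$, $t$, and $u$ are all nonempty, L5 forces $s_0=u_0=t_0$ together with $s'\leq u'$ and $t'\leq u'$, the induction hypothesis gives $s'\leq t'$ or $t'\leq s'$, and L5 lifts this conclusion back to $s$ and $t$. For L8 I would induct on $t$, using the definition of \emph{in} from L7 together with Head/Tail L3 to peel off $t_0$, matching the first disjunct to the case ``the occurrence starts at the head of $t$'' and the second to ``the occurrence lies strictly inside $t'$''.

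I expect the only real friction to be the bookkeeping in L6 and L8 — pinning down the empty/nonempty case distinctions and picking the right induction measure — but there is no conceptual obstacle: because event sets are compared by ordinary equality, the concurrent-trace setting introduces nothing that was not already present in the CSP arguments, which is why the statement asserts these laws are ``the same'' as in CSP.
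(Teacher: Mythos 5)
Your proposal is correct and takes essentially the same route as the paper, which simply asserts that these laws ``can be proven straightforwardly from the related definition of the ordering relation'' (i.e.\ from the witness $u$ in $s\smallfrown u=t$ together with the earlier laws of catenation, restriction, and head and tail) --- exactly the derivation you spell out. Your explicit witnesses for L1--L5, L9, L10 and the inductions for L6 and L8 are a sound filling-in of what the paper leaves implicit.
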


\begin{proof}
These laws can be proven straightforwardly from the related definition of the ordering relation.
\end{proof}

\begin{definition}[Length]
The length of a trace $s$ is defined as the number of commas plus 1, and denoted $\sharp s$.
\end{definition}

\begin{proposition}[Laws of Length]
The laws of length for concurrent traces are as follows.
\begin{itemize}
  \item L1. $\sharp\langle\rangle = 0$
  \item L2. $\sharp\langle\{x_1,\cdots,x_n\}\rangle = 1$
  \item L3. $\sharp(s\smallfrown t)=(\sharp s)+ (\sharp t)$
  \item L4. $\sharp(t\upharpoonright(A\cup B))=\sharp(t\upharpoonright A)+ \sharp(t\upharpoonright B)-\sharp(t\upharpoonright(A\cap B))$
  \item L5. $s\leq t\Rightarrow \sharp s\leq \sharp t$
  \item L6. $\sharp(t^n)= n\times(\sharp t)$
\end{itemize}
\end{proposition}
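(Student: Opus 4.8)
The plan is to take each law in turn, leaning on structural induction over traces together with the laws of catenation, head and tail, and restriction already established.

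\textbf{L1} and \textbf{L2} are immediate from the definition of $\sharp$: $\langle\rangle$ contains no separating commas and so has length $0$, while $\langle\{x_1,\cdots,x_n\}\rangle$ is a single slot and has length $1$. For \textbf{L3} I would induct on the structure of $s$, using the elementary fact, read straight off the definition of $\sharp$, that prepending one event-set to a trace increases its length by one. If $s=\langle\rangle$, then $s\smallfrown t=t$ by Catenation L1 and $\sharp s=0$ by L1, so both sides reduce to $\sharp t$. If $s\neq\langle\rangle$, decompose $s=\langle s_0\rangle\smallfrown s'$ by Head and Tail L3; associativity of catenation (Catenation L2) gives $s\smallfrown t=\langle s_0\rangle\smallfrown(s'\smallfrown t)$, whence $\sharp(s\smallfrown t)=1+\sharp(s'\smallfrown t)=1+\sharp s'+\sharp t$ by the induction hypothesis, and this equals $\sharp s+\sharp t$ since $\sharp s=1+\sharp s'$.

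Laws \textbf{L5} and \textbf{L6} then follow from \textbf{L3}. For \textbf{L5}, $s\leq t$ means $s\smallfrown u=t$ for some $u$ by the definition of the ordering, so $\sharp s\leq\sharp s+\sharp u=\sharp t$, since lengths are nonnegative. For \textbf{L6} I would induct on $n$: the base case is $\sharp(t^0)=\sharp\langle\rangle=0=0\times\sharp t$ by Catenation L6 and Length L1, and the induction step is $\sharp(t^{n+1})=\sharp(t\smallfrown t^n)=\sharp t+\sharp(t^n)=\sharp t+n\times\sharp t=(n+1)\times\sharp t$, using Catenation L7, Length L3, and the induction hypothesis.

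The one law that is not mechanical is \textbf{L4}, the inclusion--exclusion identity for restriction. I would again induct on the length of $t$, using Restriction L1--L3 together with the head/tail decomposition: when $t=\langle\rangle$ all four terms are $0$; when $t=\langle t_0\rangle\smallfrown t'$, split the head off each of the four restrictions by Restriction L2, evaluate the head's contribution with Restriction L3 and L4, and fold the tail into the induction hypothesis. What remains is a case analysis on the position of the event-set $t_0$ relative to $A$, $B$, $A\cap B$ and $A\cup B$, verifying in each case that (head contribution to the $A$-count) $+$ (head contribution to the $B$-count) $-$ (head contribution to the $(A\cap B)$-count) equals the head contribution to the $(A\cup B)$-count. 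This case analysis is where I expect the real work to be: for concurrent traces a slot is a \emph{set} of events, so one must first pin down what $\langle t_0\rangle\upharpoonright C$ denotes when $t_0$ is only partially contained in $C$ --- Restriction L3 and L4 cover only the fully-inside and fully-outside cases --- and it is precisely at this interaction of set-valued slots with restriction that the bookkeeping is delicate and must be checked to be faithful to the definition of restriction.
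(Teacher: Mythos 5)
Your handling of L1--L3, L5 and L6 is fine and is exactly the ``straightforward from the definition'' route the paper itself invokes (the paper gives no details beyond that one line). The genuine problem is L4, and it sits precisely at the point you flag and then defer: the case analysis you postpone cannot in fact be completed, because the inclusion--exclusion identity is \emph{false} for concurrent traces under the paper's definition of restriction. Take $t=\langle\{a,b\}\rangle$, $A=\{a\}$, $B=\{b\}$. Since restriction ``omits all symbols outside $A$,'' a slot that straddles $A$ and $B$ survives (shrunken) in both $t\upharpoonright A$ and $t\upharpoonright B$: here $t\upharpoonright A=\langle\{a\}\rangle$ and $t\upharpoonright B=\langle\{b\}\rangle$, each of length $1$, while $t\upharpoonright(A\cup B)=\langle\{a,b\}\rangle$ has length $1$ and $t\upharpoonright(A\cap B)=t\upharpoonright\{\}=\langle\rangle$ has length $0$ by Restriction L5. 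The head contributions are thus $1+1-0=2\neq 1$, so $\sharp(t\upharpoonright(A\cup B))\neq\sharp(t\upharpoonright A)+\sharp(t\upharpoonright B)-\sharp(t\upharpoonright(A\cap B))$. (The alternative reading, in which a slot not wholly inside $A$ is dropped, fails on the same example: the left side is $1$ and the right side is $0$.)

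So the difficulty is not bookkeeping to be checked but an actual failure of the law once slots are sets rather than single symbols: a slot contributes to the count of $t\upharpoonright C$ according to whether it \emph{meets} $C$, and the indicator of ``meets $A\cup B$'' does not satisfy inclusion--exclusion against ``meets $A$,'' ``meets $B$,'' ``meets $A\cap B$'' unless every slot intersects at most one of $A\setminus B$ and $B\setminus A$ --- e.g.\ when all slots are singletons, which is exactly the CSP case the law was copied from. To salvage L4 you must either add such a side condition on $t$ (or on $A,B$ relative to the alphabet decomposition), or change the definition of restriction/length so that a straddling slot is counted consistently; your induction scheme (strip the head with Restriction L2, apply the induction hypothesis to the tail) is the right frame, but as planned it breaks at the straddling-slot case rather than merely requiring care there.
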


\begin{proof}
These laws can be proven straightforwardly from the related definition of the length.
\end{proof}

The concurrent traces can also be implemented by LISP as traces in CSP.

\begin{definition}[Concurrent Traces of a Process]
The complete set of all possible concurrent traces of a process $P$ can be known in advance, and is denoted by a function $traces(P)$.
\end{definition}

\begin{proposition}[Laws of concurrent Traces of a Process]
The laws of concurrent traces of a process are as follows.
\begin{itemize}
  \item L1. $traces(STOP)=\{t|t=\langle\rangle\}=\{\langle\rangle\}$
  \item L2. $traces(\{c_1,\cdots,c_n\}\rightarrow P)=\{t|t=\langle\rangle \vee (t_0=\{c_1,\cdots,c_n\}\wedge t'\in traces(P))\}=\{\langle\rangle\}\cup\{\langle \{c_1,\cdots,c_n\}\rangle\smallfrown t|t\in traces(P)\}$
  \item L3. $traces(\{c_1,\cdots,c_n\}\rightarrow P\mid \{d_1,\cdots,d_m\}\rightarrow Q )=\{t|t=\langle\rangle \vee (t_0=\{c_1,\cdots,c_n\}\wedge t'\in traces(P)) \vee(t_0=\{d_1,\cdots,d_m\}\wedge t'\in traces(Q))\}$
  \item L4. $traces(\{x_1,\cdots,x_n\}:B\rightarrow P(x_1,\cdots,x_n))=\{t|t=\langle \rangle \vee (t_0\in B\wedge t'\in traces(P(t_0)))\}$
  \item L5. $traces(\mu X:A\bullet F(X))=\bigcup_{n\geq 0}traces(F^n(STOP_A))$
  \item L6. $\langle\rangle\in traces(P)$
  \item L7. $s\smallfrown t\in traces(P)\Rightarrow s\in traces(P)$
  \item L8. $traces(P)\subseteq (\alpha P)^\ast$
\end{itemize}
\end{proposition}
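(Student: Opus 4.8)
The plan is to verify each of the eight laws L1--L8 separately, treating L1--L5 as essentially definitional (following the pattern of the corresponding proofs in CSP) and reserving the real work for L6--L8, which are the closure properties of the trace sets. For L1, $traces(STOP)=\{\langle\rangle\}$ follows because $STOP$ engages in no events, so no nonempty prefix can be appended; this is immediate from law L1A of Proposition~\ref{LoP}. For L2 and L3, I would unfold the meaning of the prefix and choice constructs given in their definitions: a trace of $(\{c_1,\cdots,c_n\}\rightarrow P)$ is either empty or begins with the simultaneous event set $\{c_1,\cdots,c_n\}$ and continues with a trace of $P$, which is exactly the set-builder description on the right; the two forms of the right-hand side in L2 are reconciled using the head/tail laws (Laws of Head and Tail, L3). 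L3 is the same argument with two guarded alternatives, using L1B/L1C to see that the two branches are distinguishable and symmetric. L4 is the indexed-choice generalization of L2--L3, handled identically but ranging $t_0$ over all of $B$ and letting the continuation depend on $t_0$.

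For L5, the trace set of a recursively defined process, the plan is to use the characterization $\mu X:A\bullet F(X)=F(\mu X\bullet F(X))$ (Proposition~\ref{LoP}, L2A) together with the fact that $F$ is prefix-guarded. The standard CSP argument applies: $F^n(STOP_A)$ and $\mu X:A\bullet F(X)$ agree on all traces of length at most $n$ because the guard forces at least one extra event to be consumed per application of $F$, so any trace of length $n$ is already a trace of $F^n(STOP_A)$; conversely every trace of $F^n(STOP_A)$ is a trace of the fixed point because $STOP_A$'s only trace $\langle\rangle$ is a trace of everything and $F$ is monotone with respect to trace inclusion. Taking the union over all $n$ gives equality. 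I would remark that nothing here changes from CSP since the guard mechanism is insensitive to whether a single event or a set of simultaneous events is consumed at each step.

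For L6--L8 the natural approach is a single simultaneous structural induction on the syntax of $P$, using L1--L5 as the base/inductive descriptions of $traces(P)$. L6 ($\langle\rangle\in traces(P)$): in every one of L1--L4 the right-hand set explicitly contains $\langle\rangle$ via the disjunct $t=\langle\rangle$, and for L5 we have $\langle\rangle\in traces(STOP_A)\subseteq traces(\mu X\bullet F(X))$. L7 (prefix-closure): if $s\smallfrown t\in traces(P)$ then by L1--L4, either $s\smallfrown t=\langle\rangle$ (so $s=\langle\rangle$ by Laws of Catenation L5, and $\langle\rangle\in traces(P)$ by L6), or $(s\smallfrown t)_0$ is the guard and $(s\smallfrown t)'\in traces(P')$ for the appropriate continuation $P'$; then either $s=\langle\rangle$ (done by L6) or $s_0=(s\smallfrown t)_0$ and $s'\smallfrown t=(s\smallfrown t)'\in traces(P')$, so by the induction hypothesis $s'\in traces(P')$ and hence $s\in traces(P)$; for recursion, a trace of the fixed point lies in some $traces(F^n(STOP_A))$, to which the inductive argument applies. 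L8 ($traces(P)\subseteq(\alpha P)^\ast$): again induct, using that each guard set lies in $\alpha P$ (from the definitions of prefix and choice, which require $x_i\in\alpha P$) together with Laws of Star L4, so that a trace built by prepending a guard set in $\alpha P$ to a trace in $(\alpha P)^\ast$ stays in $(\alpha P)^\ast$.

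The main obstacle I anticipate is the recursion case of the L6--L8 induction: one must be careful that the inductive hypothesis is applied to the finite approximants $F^n(STOP_A)$ rather than to $\mu X\bullet F(X)$ itself (which is not syntactically smaller), and that the three properties are genuinely preserved by the union $\bigcup_{n\ge 0}$ --- prefix-closure and the star-membership bound are preserved under arbitrary unions, and each approximant contains $\langle\rangle$, so this goes through, but it is the one place where the "straightforwardly from the definitions" slogan needs the guardedness hypothesis and the fixed-point law L2A to do real work.
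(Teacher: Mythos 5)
Your proposal is correct in substance, but it is worth saying plainly that it does far more than the paper does: the paper's entire proof of this proposition is the single sentence ``These laws can be proven straightforwardly from the related definition of concurrent traces of a process,'' the same formula it uses for every proposition, so there is no detailed argument in the paper to match yours against. Your route is the standard one from Hoare's CSP, transplanted to event \emph{sets}: L1--L4 by unfolding the prefix/choice definitions, L5 by the finite-approximant argument (guardedness gives agreement up to length $n$, monotonicity of $F$ on trace sets plus $\langle\rangle\in traces(STOP_A)$ gives the reverse inclusion, via Proposition~\ref{LoP} L2A), and L6--L8 by induction using L1--L5. The comparison is this: the paper, following Hoare, can afford its one-line gloss because in that tradition L6--L8 are essentially part of the semantic definition of a process (trace sets are required to be nonempty and prefix-closed in the denotational model the paper defers to in its ``Mathematical Theory'' subsection), whereas you derive them from L1--L5, which buys a self-contained argument at the price of exactly the subtlety you flag --- the recursion case, where the induction hypothesis must be applied to the approximants $F^n(STOP_A)$ and one must check the three properties survive the union $\bigcup_{n\geq 0}$. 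That is handled correctly in your sketch (prefix-closure and membership in $(\alpha P)^\ast$ are preserved by unions, and every approximant contains $\langle\rangle$); the only technical refinement I would suggest is that in the L7 induction the continuation process after a guard need not be syntactically smaller even within an approximant, so it is cleaner to induct on the length of the trace $s\smallfrown t$ (with L1--L5 used to peel off the head) rather than on the syntax of $P$; with that adjustment the argument goes through without any appeal to syntactic size.
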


\begin{proof}
These laws can be proven straightforwardly from the related definition of concurrent traces of a process.
\end{proof}

The concurrent traces of a process can also be implemented by LISP.

\begin{definition}[After]
If $s\in traces(P)$, $P$ after $s$ denoted $P/s$ is behaves the same as $P$ after $P$ has executed all events in $s$.
\end{definition}

\begin{proposition}[Laws of After]
The laws of after for concurrent traces are as follows.
\begin{itemize}
  \item L1. $P/\langle\rangle = P$
  \item L2. $P/(s\smallfrown t)=(P/s)/t$
  \item L3. $(\{x_1,\cdots,x_n\}:B\rightarrow P(x_1,\cdots,x_n))/\langle\{c_1,\cdots, c_n\rangle=P(c_1,\cdots,c_n)\}$, if $c_1,\cdots,c_n\in B$
      \begin{itemize}
        \item L3A. $(\{c_1,\cdots,c_n\}\rightarrow P)/\langle\{c_1,\cdots,c_n \}\rangle = P$
      \end{itemize}
  \item $traces(P/\langle s\rangle)=\{t|s\smallfrown t\in traces(P)\}$, if $s\in traces(P)$
\end{itemize}
\end{proposition}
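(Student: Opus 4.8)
The plan is to establish the five laws in the order listed, each one reduced to the definitions of \emph{after}, \emph{prefix} and \emph{catenation} together with the laws already proved for concurrent traces.

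For L1, I would simply unfold the definition of $P/s$ at $s=\langle\rangle$: after executing all (that is, none) of the events recorded in $\langle\rangle$, the process $P$ has done nothing, so $P/\langle\rangle$ behaves exactly as $P$. For L2, I would note that executing all events recorded in $s\smallfrown t$ is, by Laws of Catenation L1 and L2, the same activity as first executing all events of $s$ and then all events of $t$; hence $P/(s\smallfrown t)$ and $(P/s)/t$ denote the residual behaviour after the very same run, and are equal. (If a more formal derivation is wanted, one runs an induction on $\sharp s$ using Laws of Length L1 and L3 and Laws of Head and Tail L3, reducing the catenation to one concurrent step at a time.) For L3 I would unfold the choice-prefix $(\{x_1,\cdots,x_n\}:B\rightarrow P(x_1,\cdots,x_n))$: its only possible first concurrent event is some set $\{c_1,\cdots,c_n\}$ with $c_1,\cdots,c_n\in B$, after which by definition it behaves as $P(c_1,\cdots,c_n)$, which is precisely $\,\cdot\,/\langle\{c_1,\cdots,c_n\}\rangle$. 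L3A is then the instance where $B$ is the singleton menu $\{c_1,\cdots,c_n\}$ and $P$ does not depend on the chosen events.

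The substantive statement is the last law, $traces(P/\langle s\rangle)=\{t\mid s\smallfrown t\in traces(P)\}$ under the hypothesis $s\in traces(P)$. I would prove it by induction on $\sharp s$. In the base case $s=\langle\rangle$: by L1 we have $P/\langle\rangle=P$, and by Laws of Catenation L1 the set $\{t\mid\langle\rangle\smallfrown t\in traces(P)\}$ equals $\{t\mid t\in traces(P)\}=traces(P)$, so both sides coincide. For the inductive step, write $s=\langle s_0\rangle\smallfrown s'$ by Laws of Head and Tail L3, so that $P/s=(P/\langle s_0\rangle)/s'$ by L2. Since $s\in traces(P)$, the Laws of concurrent Traces of a Process (L2, L4, L7) show that $s_0$ is a legal first concurrent event of $P$, identify the residual process $P/\langle s_0\rangle$, and give $s'\in traces(P/\langle s_0\rangle)$; applying the induction hypothesis to $P/\langle s_0\rangle$ and $s'$ gives $traces((P/\langle s_0\rangle)/s')=\{t\mid s'\smallfrown t\in traces(P/\langle s_0\rangle)\}$, and re-expanding $traces(P/\langle s_0\rangle)$ via the prefix trace law and using associativity of catenation turns the condition $s'\smallfrown t\in traces(P/\langle s_0\rangle)$ into $(\langle s_0\rangle\smallfrown s')\smallfrown t=s\smallfrown t\in traces(P)$, which is the claim.

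The main obstacle will be the inductive step of this last law. Because \emph{after} is defined only for traces of the process, at each step I must verify that the shorter trace $s'$ genuinely lies in $traces(P/\langle s_0\rangle)$ before invoking the induction hypothesis, and I must carefully match the quantified variable $t$ on the two sides through the decomposition $s=\langle s_0\rangle\smallfrown s'$; this is where all the bookkeeping with the catenation and trace laws is needed. Everything else reduces to routine unfolding of the definitions.
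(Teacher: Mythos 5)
Your proposal is correct and follows essentially the same route as the paper, whose entire proof is the one-line remark that the laws ``can be proven straightforwardly from the related definition of after.'' Your unfolding of the definitions for L1--L3A and the induction on $\sharp s$ for the final trace law simply make explicit the routine verification the paper leaves implicit.
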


\begin{proof}
These laws can be proven straightforwardly from the related definition of after.
\end{proof}

\begin{definition}[Change of Symbol]
The change of symbol for concurrent traces is a function $f^\ast:A^\ast\rightarrow B^\ast$ with $f:A\rightarrow B$.
\end{definition}

\begin{proposition}[Laws of Change of Symbol]
The laws of change of symbol for concurrent traces are the same as for traces in CSP, we retype them as follows.
\begin{itemize}
  \item L1. $f^\ast(\langle\rangle)=\langle\rangle$
  \item L2. $f^\ast(\langle \{x_1,\cdots,x_n\}\rangle)=\langle f(x_1,\cdots,x_n)\rangle$
  \item L3. $f^\ast(s\smallfrown t)=f^\ast(s)\smallfrown f^\ast(t)$
  \item L4. $f^\ast(s)_0=f(s_0)$, if $s\neq\langle\rangle$
  \item L5. $\sharp f^\ast(s)=\sharp s$
  \item L6. $f^\ast(s\upharpoonright A)=f^\ast(s)\upharpoonright f(A)$, if $f$ is an injection.
\end{itemize}
\end{proposition}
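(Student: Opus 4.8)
The plan is to read $f^\ast$ as the homomorphic extension of $f$ to concurrent traces, defined recursively by $f^\ast(\langle\rangle)=\langle\rangle$ and $f^\ast(\langle\{x_1,\cdots,x_n\}\rangle\smallfrown s)=\langle\{f(x_1),\cdots,f(x_n)\}\rangle\smallfrown f^\ast(s)$, where $\langle f(x_1,\cdots,x_n)\rangle$ abbreviates the event set obtained by applying $f$ elementwise (matching the convention used in the prefix definition). With this reading L1 is literally the base clause, and L2 is the instance of the recursive clause with $s=\langle\rangle$ followed by an application of catenation law L1. Everything else follows by structural induction on traces --- equivalently by induction on the length $\sharp s$ --- decomposing a nonempty $s$ as $\langle s_0\rangle\smallfrown s'$ via the Laws of Head and Tail and invoking the Laws of Catenation, Length and Restriction already established above.

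For L3 I would induct on $\sharp s$. The base case $s=\langle\rangle$ reduces, via catenation law L1 and L1 here, to $f^\ast(t)=f^\ast(t)$. For the step, write $s=\langle s_0\rangle\smallfrown s'$, so that $s\smallfrown t=\langle s_0\rangle\smallfrown(s'\smallfrown t)$ by catenation law L2; applying the recursive clause (L2 here) and the induction hypothesis to $s'\smallfrown t$, then re-associating, yields $f^\ast(s)\smallfrown f^\ast(t)$. Law L4 is then immediate: for $s\neq\langle\rangle$ use $s=\langle s_0\rangle\smallfrown s'$, apply L3 here, L2 here, and head/tail law L1 to read off $f^\ast(s)_0=f(s_0)$. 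Law L5 is another induction on $\sharp s$: the singleton case uses length law L2, and the step uses L3 here together with length law L3, giving $\sharp f^\ast(s)=\sharp f^\ast(\langle s_0\rangle)+\sharp f^\ast(s')=1+\sharp s'=\sharp s$.

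The real work is L6, where the injectivity hypothesis is essential. I would again induct on $\sharp s$, the crucial case being a single event set $\langle\{x_1,\cdots,x_n\}\rangle$ (restriction treats each event set atomically, per restriction laws L3--L4). Here the left-hand side $f^\ast(\langle\{x_1,\cdots,x_n\}\rangle\upharpoonright A)$ equals $\langle\{f(x_1),\cdots,f(x_n)\}\rangle$ if every $x_i\in A$ and $\langle\rangle$ otherwise, while the right-hand side $f^\ast(\langle\{x_1,\cdots,x_n\}\rangle)\upharpoonright f(A)$ equals $\langle\{f(x_1),\cdots,f(x_n)\}\rangle$ if every $f(x_i)\in f(A)$ and $\langle\rangle$ otherwise. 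The two conditions coincide exactly because $f$ is injective: $f(x_i)\in f(A)\iff x_i\in A$. Dropping injectivity breaks this --- one can have $x_i\notin A$ yet $f(x_i)=f(a)$ for some $a\in A$ --- so the hypothesis cannot be removed. For the inductive step I would split $s=\langle s_0\rangle\smallfrown s'$, use restriction law L2 to push the restriction through the catenation, apply L3 here to split $f^\ast$, and finish with the singleton case and the induction hypothesis on $s'$. The one point demanding care is pinning down the behaviour of restriction on an event set only partially inside $A$; if the intended convention is ``keep the set iff all its symbols lie in $A$'' (as restriction laws L3--L4 suggest), the argument goes through verbatim, and any other convention would only alter the bookkeeping in the singleton case, not the role of injectivity.
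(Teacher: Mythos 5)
Your proposal is correct and consistent with the paper's approach: the paper's own proof is only the one-line assertion that these laws ``can be proven straightforwardly from the related definition of change of symbol,'' and your structural induction on traces (reading $f^\ast$ as the homomorphic, elementwise extension of $f$ to event sets) is exactly the straightforward argument that assertion gestures at. Your careful treatment of L6 --- isolating injectivity as the reason $f(x_i)\in f(A)\iff x_i\in A$, and flagging that the paper's restriction laws L3--L4 leave the partially-covered event set unspecified --- supplies detail the paper omits entirely, but it does not constitute a different route.
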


\begin{proof}
These laws can be proven straightforwardly from the related definition of change of symbol.
\end{proof}

\begin{definition}[Another Catenation]
If $s$ is a sequence and all its elements are also sequences, another catenation $\smallfrown/s$ is defined as catenating all the elements together in the original order.
\end{definition}

\begin{proposition}[Laws of Another Catenation]
The laws of another catenation for concurrent traces are the same as for traces in CSP, we retype them as follows.
\begin{itemize}
  \item L1. $\smallfrown/\langle\rangle=\langle\rangle$
  \item L2. $\smallfrown/\langle s\rangle=s$
  \item L3. $\smallfrown/(s\smallfrown t)=(\smallfrown/s)\smallfrown(\smallfrown/t)$
\end{itemize}
\end{proposition}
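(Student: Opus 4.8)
The plan is to replace the informal description of $\smallfrown/$ (``catenating all the elements together in the original order'') by an equivalent recursive characterization and then verify the three laws by structural induction, exactly as is done for ordinary traces in CSP. Concretely, I would first observe that $\smallfrown/$ is determined by the two clauses $\smallfrown/\langle\rangle=\langle\rangle$ and $\smallfrown/(\langle s_0\rangle\smallfrown s')=s_0\smallfrown(\smallfrown/s')$, the second clause being legitimate because every nonempty sequence of traces decomposes uniquely as $\langle s_0\rangle\smallfrown s'$ by the Laws of Head and Tail (L3), and the recursion terminates because $\sharp s'<\sharp s$ by the Laws of Length (L1--L3). Law L1 is then immediate: it is the first clause of the definition.

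For L2 I would compute $\smallfrown/\langle s\rangle=\smallfrown/(\langle s\rangle\smallfrown\langle\rangle)=s\smallfrown(\smallfrown/\langle\rangle)=s\smallfrown\langle\rangle=s$, using the second clause with $s_0:=s$ and $s':=\langle\rangle$, then L1, then the Laws of Catenation (L1). For L3 I would induct on $\sharp s$. The base case $s=\langle\rangle$ reduces to comparing $\smallfrown/(\langle\rangle\smallfrown t)=\smallfrown/t$ with $(\smallfrown/\langle\rangle)\smallfrown(\smallfrown/t)=\langle\rangle\smallfrown(\smallfrown/t)$, and both sides equal $\smallfrown/t$ by the Laws of Catenation (L1) together with L1 of the present proposition. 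In the inductive step I write $s=\langle s_0\rangle\smallfrown s'$, use associativity of catenation to regroup $s\smallfrown t=\langle s_0\rangle\smallfrown(s'\smallfrown t)$, apply the recursive clause, apply the induction hypothesis to $s'\smallfrown t$, and finally re-associate with the Laws of Catenation (L2) to recover $(\smallfrown/s)\smallfrown(\smallfrown/t)$; the chain of equalities closes by one more use of the recursive clause.

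The only real subtlety — and the step I expect to be the main obstacle — is justifying the recursive characterization itself, i.e.\ that an operator matching the informal ``catenate everything in order'' description exists, is unique, and satisfies the two clauses above. This is where one must be careful that $\smallfrown/$ is applied to a sequence whose elements are themselves traces, so that the ambient catenation, head/tail, and length laws are being invoked one level up; since all of those laws were established generically for sequences (traces) earlier in this section, they transfer verbatim, and no feature peculiar to concurrent traces — the fact that each position carries a \emph{set} of events rather than a single event — plays any role here. Once the recursive characterization is in place, L1--L3 follow by the routine inductions sketched above.
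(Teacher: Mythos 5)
Your proof is correct and follows the same route the paper intends: the paper's own ``proof'' is only the one-line remark that the laws follow straightforwardly from the definition of $\smallfrown/$, and your recursive characterization plus induction on $\sharp s$ is exactly the natural way to carry that out in detail. No gap; you have simply supplied the routine verification the paper leaves implicit.
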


\begin{proof}
These laws can be proven straightforwardly from the related definition of another catenation.
\end{proof}

\begin{definition}[Concurrent Composition]
A sequence $s$ is a concurrent composition ($\textsf{cc}$) of two sequences $t$ and $u$, if it can be composed one by one from $t$ and $u$.
\end{definition}

\begin{proposition}[Laws of Concurrent Composition]
The laws of concurrent composition for concurrent traces are as follows.
\begin{itemize}
  \item L1. $\langle\rangle \textsf{ cc } s\equiv s$
  \item L2. $s\textsf{ cc } \langle\rangle\equiv s$
  \item L3. $(\langle x \rangle)\smallfrown s \textsf{ cc } \langle y\rangle\smallfrown t\equiv (\langle\{x,y\}\rangle\smallfrown(s\textsf{ cc } t))$
\end{itemize}
\end{proposition}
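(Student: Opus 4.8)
The plan is to first make the informal phrase ``composed one by one from $t$ and $u$'' precise by recasting $\textsf{cc}$ as a binary operation on concurrent traces defined by structural recursion on its two arguments, and then to read off each of L1--L3 as an instance of that recursion. Concretely, I would define $\langle\rangle\textsf{ cc } u = u$ and $s\textsf{ cc }\langle\rangle = s$ in the base cases, and, when both $s$ and $t$ are nonempty, $s\textsf{ cc } t = \langle s_0\cup t_0\rangle\smallfrown(s'\textsf{ cc } t')$, where $s_0,t_0$ are the head event-sets and $s',t'$ the tails, so that the merged first step is just the union of the two leading steps. Before using this one should check the recursion is well founded: by Law L3 of Length the quantity $\sharp s+\sharp t$ strictly decreases at the recursive call, so $\textsf{cc}$ is a total function on pairs of finite traces.

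With the definition in this form the three laws are immediate. L1 and L2 are literally the two base clauses. For L3, decompose $\langle x\rangle\smallfrown s$ and $\langle y\rangle\smallfrown t$ using Laws L1--L2 of Head and Tail: their heads are the singleton steps $\{x\}$ and $\{y\}$ and their tails are $s$ and $t$; substituting into the recursive clause gives $\langle\{x\}\cup\{y\}\rangle\smallfrown(s\textsf{ cc } t)=\langle\{x,y\}\rangle\smallfrown(s\textsf{ cc } t)$, which is the stated right-hand side. No induction is needed for the three laws themselves; induction would only enter for derived facts such as $\sharp(s\textsf{ cc } t)=\max(\sharp s,\sharp t)$ or symmetry $s\textsf{ cc } t=t\textsf{ cc } s$, neither of which is claimed here.

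I expect the only real obstacle to be definitional rather than computational: the stated Definition of concurrent composition is too vague to support a formal argument on its own, so the actual content of the proposition is that L1--L3 form a consistent and exhaustive recursive specification of $\textsf{cc}$. I would therefore present the proof as checking that the three clauses (i) cover every pair of traces --- at least one empty, handled by L1 or L2, or both nonempty, handled by L3 --- (ii) do not overlap in a way that assigns conflicting values, and (iii) come with the decreasing measure $\sharp s+\sharp t$ that guarantees termination; once this is in place, L1--L3 hold simply by unfolding. A secondary point worth flagging is the notational conflation in L3 between an event $x$ and the singleton step $\{x\}$: the clean reading is that in the concurrent-trace world every atomic step is a set, $\langle x\rangle$ abbreviates $\langle\{x\}\rangle$, and $\{x\}\cup\{y\}=\{x,y\}$ is literally the merged step.
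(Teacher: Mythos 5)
Your proposal is correct, and it is in fact more substantial than what the paper offers: the paper's entire proof is the one-line remark that the laws ``can be proven straightforwardly from the related definition of concurrent composition,'' even though that definition (``composed one by one from $t$ and $u$'') is, as you rightly observe, too informal to ground any derivation. What you do differently is supply the missing formal content: you recast $\textsf{cc}$ as a total binary operation defined by structural recursion --- base clauses for an empty argument and the merging clause $s\textsf{ cc }t=\langle s_0\cup t_0\rangle\smallfrown(s'\textsf{ cc }t')$ --- check well-foundedness via the decreasing measure $\sharp s+\sharp t$ (using Length L3), verify that the clauses cover all cases and agree on the overlap $\langle\rangle\textsf{ cc }\langle\rangle$, and then obtain L1--L3 by unfolding, with Head--Tail L1--L2 handling the decomposition in L3. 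This buys genuine rigor where the paper has none, and your reading is consistent with the rest of the paper: the union-of-heads clause matches Concurrency laws L4--L5 (where $(c\rightarrow P)\parallel(d\rightarrow Q)$ yields the step $\{c,d\}$ when $c\neq d$ and just $c$ when $c=d$), and your remark that $\langle x\rangle$ should be read as the singleton step $\langle\{x\}\rangle$ resolves a notational sloppiness the paper leaves unaddressed. The only caveat worth stating explicitly is that your recursive clause is a (natural) extension of what L3 literally asserts, since L3 only covers singleton heads; presenting it as the definition, with L1--L3 as consequences, is exactly the right way to make the proposition meaningful.
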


\begin{proof}
These laws can be proven straightforwardly from the related definition of concurrent composition.
\end{proof}

\begin{definition}[Subscription]
The $i^{th}$ element of the sequence $s$ denoted $s[i]$ for $=\leq i\leq \sharp s$.
\end{definition}

\begin{proposition}[Laws of Subscription]
The laws of subscription for concurrent traces are the same as for traces in CSP, we retype them as follows.
\begin{itemize}
  \item L1. $s[0]=s_0\wedge s[i+1]=s'[i]$, if $s\neq\langle\rangle$
  \item L2. $(f^\ast(s))[i]=f(s[i])$, for $i<\sharp s$
\end{itemize}
\end{proposition}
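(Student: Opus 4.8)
The plan is to prove both laws by induction on the structure of the sequence $s$ (equivalently, by induction on the index $i$), using the recursive decomposition $s=\langle s_0\rangle\smallfrown s'$ that the Laws of Head and Tail (L3) guarantee for every nonempty $s$. Nothing deeper than unfolding the definition of subscription together with previously established laws should be required.

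Law L1 is essentially the defining recursion of the subscription operator, so I would obtain it directly. The first conjunct $s[0]=s_0$ holds because subscription numbers the elements of a sequence starting from its head. For the second conjunct $s[i+1]=s'[i]$, write $s=\langle s_0\rangle\smallfrown s'$: the elements of $s$ are $s_0$ followed, in order, by the elements of $s'$, so the element of $s$ at position $i+1$ is the element of $s'$ at position $i$. Here I should note that, since $s\neq\langle\rangle$, the Laws of Length (L3) give $\sharp s = 1 + \sharp s'$, so $i+1$ ranges over the legal positions of $s$ exactly when $i$ ranges over those of $s'$.

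For Law L2 I would argue by induction on $i$. In the base case $i=0$: $(f^\ast(s))[0]=(f^\ast(s))_0$ by L1, then $(f^\ast(s))_0=f(s_0)$ by the Laws of Change of Symbol (L4), and $f(s_0)=f(s[0])$ by L1 again. For the inductive step I first record the auxiliary fact that taking the tail commutes with $f^\ast$, i.e.\ $(f^\ast(s))'=f^\ast(s')$: decomposing $s=\langle s_0\rangle\smallfrown s'$ and applying the Laws of Change of Symbol (L2, L3) yields $f^\ast(s)=\langle f(s_0)\rangle\smallfrown f^\ast(s')$, whence the Laws of Head and Tail (L2) give the claim. Then $(f^\ast(s))[i+1]=(f^\ast(s))'[i]=(f^\ast(s'))[i]=f(s'[i])=f(s[i+1])$, using L1, the auxiliary fact, the induction hypothesis, and L1 once more; the hypothesis applies because $\sharp f^\ast(s')=\sharp s'$ by Change of Symbol (L5), so $i$ stays within range.

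The one real subtlety (the main obstacle, such as it is) is bookkeeping on the index range, which is not helped by the evident typo in the bound stated in the definition of subscription. I would first fix the convention to $0\le i<\sharp s$, consistent with L1's $s[0]=s_0$, and then check in each step that passing to a tail or applying $f^\ast$ shifts the set of legal indices as expected (via Laws of Length L3 and Change of Symbol L5). Everything else is a routine unfolding of definitions combined with the already-established laws of catenation, head and tail, length, and change of symbol.
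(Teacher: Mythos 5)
Your proof is correct, and it follows essentially the same route as the paper, which simply asserts that these laws follow straightforwardly from the definition of subscription; you have merely filled in the routine induction on the index and the bookkeeping via the laws of head and tail, length, and change of symbol. Your observation that the stated index bound ``$=\leq i\leq \sharp s$'' is a typo for $0\leq i<\sharp s$ is also well taken and consistent with L1.
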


\begin{proof}
These laws can be proven straightforwardly from the related definition of subscription.
\end{proof}

\begin{definition}[Reversal]
The reversal of a sequence $s$ denoted $\overline{s}$, is obtained by taking its elements in reverse order.
\end{definition}

\begin{proposition}[Laws of Reversal]
The laws of reversal for concurrent traces are as follows.
\begin{itemize}
  \item L1. $\overline{\langle\rangle}=\langle\rangle$
  \item L2. $\overline{\langle\{x_1,\cdots,x_n\}\rangle}=\langle\{x_1,\cdots,x_n\}\rangle$
  \item L3. $\overline{s\smallfrown t}=\overline{t}\smallfrown\overline{s}$
  \item L4. $\overline{\overline{s}}=s$
  \item L5. $\overline{s}[i]=s[\sharp s-i-1]$, for $i\leq\sharp s$
\end{itemize}
\end{proposition}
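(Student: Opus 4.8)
The plan is to derive all five laws directly from the definition of reversal together with the already-established laws of catenation, length, and subscription. The workhorse is the observation that every (concurrent) trace is either $\langle\rangle$ or of the form $\langle\{x_1,\cdots,x_n\}\rangle\smallfrown s$, so each law reduces to a structural induction on the length $\sharp s$. Laws L1 and L2 are immediate: the empty sequence read in reverse order is still empty, and a one-element sequence read in reverse order is unchanged; these also supply the base cases for the inductions below.

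For L3 I would induct on $\sharp s$. In the base case $s=\langle\rangle$, law L1 of catenation gives $\langle\rangle\smallfrown t=t$, so $\overline{\langle\rangle\smallfrown t}=\overline{t}=\overline{t}\smallfrown\langle\rangle=\overline{t}\smallfrown\overline{\langle\rangle}$ using L1 here. For the inductive step write $s=\langle\{x_1,\cdots,x_n\}\rangle\smallfrown s'$; by associativity of catenation $s\smallfrown t=\langle\{x_1,\cdots,x_n\}\rangle\smallfrown(s'\smallfrown t)$, reversing moves the head block $\{x_1,\cdots,x_n\}$ to the very end, and the induction hypothesis applied to $s'$ and $t$ closes the step. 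The associativity law (L2 of catenation) is precisely what makes this bookkeeping go through, so that is the point to be careful about.

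Given L3, law L4 is a short induction: $\overline{\overline{\langle\rangle}}=\langle\rangle$ by L1, and for $s=\langle\{x_1,\cdots,x_n\}\rangle\smallfrown s'$ we get $\overline{\overline{s}}=\overline{\,\overline{s'}\smallfrown\langle\{x_1,\cdots,x_n\}\rangle\,}=\langle\{x_1,\cdots,x_n\}\rangle\smallfrown\overline{\overline{s'}}$ by L3, which the induction hypothesis collapses to $s$. Finally L5 is proven by induction on $\sharp s$ using the subscription laws: the claim is vacuous when $\sharp s=0$, and for $s=\langle\{x_1,\cdots,x_n\}\rangle\smallfrown s'$ one splits on whether $i=\sharp s-1$, in which case $\overline{s}[i]$ is the reversed head and equals $s_0=s[0]=s[\sharp s-i-1]$, or $i<\sharp s-1$, which follows from the induction hypothesis on $s'$ together with L1 of subscription ($s[i+1]=s'[i]$) and L3 of length.

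I do not expect any single deep obstacle; the one place that requires genuine care is the index arithmetic in L5, where the off-by-one conventions baked into the Length and Subscription definitions (that $\sharp\langle\{x_1,\cdots,x_n\}\rangle=1$ and $s[0]=s_0$) must be tracked exactly. Once those are pinned down, each of L1--L5 is a routine induction of the kind already used for the preceding propositions on traces.
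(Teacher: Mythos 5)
Your proof is correct and matches the paper's approach: the paper simply asserts that these laws "can be proven straightforwardly from the related definition of reversal," and your structural inductions on $\sharp s$ (using the catenation, length, and subscription laws) are exactly the routine verification that assertion points to, carried out in detail. The index bookkeeping in your L5 case split ($i=\sharp s-1$ versus $i<\sharp s-1$) checks out.
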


\begin{proof}
These laws can be proven straightforwardly from the related definition of reversal.
\end{proof}

\begin{definition}[Selection]
The selection $s\downarrow \{x_1,\cdots, x_n\}$ of a sequence of pairs $s$ is obtained by replacing each pair by its second element, if its first element is $\{x_1,\cdots, x_n\}$. \end{definition}

\begin{proposition}[Laws of Selection]
The laws of selection for concurrent traces are as follows.
\begin{itemize}
  \item L1. $\langle\rangle\downarrow \{x_1,\cdots,x_n\}=\langle\rangle$
  \item L2. $(\langle\{y_1,\cdots,y_m\}.\{z_1,\cdots,z_k\}\rangle\smallfrown t\downarrow\{x_1,\cdots,x_n \})=t\downarrow\{x_1,\cdots,x_n\}$, if $m\neq n$ or $y_i\neq x_i(1\leq i\leq m=n)$
  \item L3. $(\langle\{x_1,\cdots,x_n\}.\{z_1,\cdots,z_k\}\rangle\smallfrown t)\downarrow\{x_1,\cdots,x_n\}=\langle\{z_1,\cdots,z_k\}\rangle\smallfrown(t\downarrow \{x_1,\cdots,x_n\})$
\end{itemize}
\end{proposition}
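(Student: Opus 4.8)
The plan is to follow the same pattern as the earlier laws: recast the informal definition of selection as a recursion on the structure of the trace, and then read off L1--L3 as the three defining clauses (one for the empty trace, two for the nonempty case, split on whether the first component of the leading pair matches the selector set).

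First I would make the definition operational. A sequence of pairs $s$ is either empty, or has the form $\langle p\rangle\smallfrown s'$ where $p=A.B$ is a pair of event sets and $s'$ is again a sequence of pairs (using the Head and Tail decomposition L3 of that proposition). Define $\langle\rangle\downarrow\{x_1,\dots,x_n\}=\langle\rangle$; and for $s=\langle p\rangle\smallfrown s'$ with $p=A.B$, set $s\downarrow\{x_1,\dots,x_n\}=\langle B\rangle\smallfrown(s'\downarrow\{x_1,\dots,x_n\})$ if $A=\{x_1,\dots,x_n\}$, and $s\downarrow\{x_1,\dots,x_n\}=s'\downarrow\{x_1,\dots,x_n\}$ otherwise. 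This recursion terminates because each step strictly decreases $\sharp s$ (by Laws of Length L2--L3), so $\downarrow$ is well defined on all finite traces.

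With this in hand the three laws are immediate. L1 is the base clause verbatim. L3 is the first recursive clause applied with $A=\{x_1,\dots,x_n\}$: the leading pair is selected, its second component $\{z_1,\dots,z_k\}$ is emitted, and the tail is processed recursively. L2 is the second recursive clause: the stated hypothesis ``$m\neq n$, or $y_i\neq x_i$ for some $1\leq i\leq m=n$'' is exactly the negation of $\{y_1,\dots,y_m\}=\{x_1,\dots,x_n\}$ once one reads the selector and the first component as ordered enumerations, so the leading pair is not selected and only its tail contributes.

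The only real obstacle is bookkeeping rather than mathematics: one must fix the convention that a ``pair'' inside a concurrent trace is written $A.B$ with $A,B$ finite event sets, and specify whether the selector $\{x_1,\dots,x_n\}$ is matched against the first component $A$ as a set or, as the wording of L2 suggests, as an ordered tuple. The proof should declare this convention explicitly; under it, L2's side condition is precisely the spelling-out of $A\neq\{x_1,\dots,x_n\}$. Once the convention is pinned down, each law reduces to a one-line appeal to the corresponding clause of the recursive definition, exactly in the paper's uniform ``straightforwardly from the definition'' style.
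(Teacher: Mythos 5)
Your proposal is correct and follows essentially the same route as the paper, which simply appeals directly to the definition of selection; you have merely spelled out that appeal as a structural recursion on the trace and matched each law to its defining clause. The extra care you take with the pair convention and the set-versus-ordered-enumeration reading of L2's side condition is a reasonable clarification but does not change the approach.
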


\begin{proof}
These laws can be proven straightforwardly from the related definition of selection.
\end{proof}

\begin{definition}[Composition]
The composition of sequences $s$ and $t$ denoted $(s;t)$, means that when $s$ is successfully terminated (denoted $\surd$), $s$ starts.
\end{definition}

\begin{proposition}[Laws of Composition]
The laws of composition for concurrent traces are as follows.
\begin{itemize}
  \item L1. $s;t=s$, if $\neg(\langle\surd\rangle$ in $s)$
  \item L2. $(s\smallfrown\langle\surd\rangle);t=s\smallfrown t$, if $\neg(\langle\surd\rangle$ in $s)$
      \begin{itemize}
        \item L2A. $(s\smallfrown\langle\surd\rangle\smallfrown u);t=s\smallfrown t$, if $\neg(\langle\surd\rangle$ in $s)$
      \end{itemize}
  \item L3. $s;(t;u)=(s;t);u$
  \item L4A. $s\leq t\Rightarrow ((u;s)\leq(u;t))$
  \item L4B. $s\leq t\Rightarrow ((s;u)\leq(t;u))$
  \item L5. $\langle\rangle;t=\langle\rangle$
  \item L6. $\langle\surd\rangle;t=t$
  \item L7. $s;\langle\surd\rangle=s$, if $\neg(\langle\surd\rangle$ in $(\overline{s})')$
\end{itemize}
\end{proposition}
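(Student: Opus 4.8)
The plan is first to make the definition of trace composition precise, since the stated one is only informal and moreover appears to contain a typo (when $s$ terminates, it should be $t$, not $s$, that starts). I read $(s;t)$ as: execute $s$, and at the first moment $s$ reports success switch over to $t$, discarding whatever $s$ had queued after that $\surd$. Formally, define $(s;t)$ by recursion on $s$: $\langle\rangle;t=\langle\rangle$; and for $s\neq\langle\rangle$, $s;t=t$ if $s_0=\surd$ and $s;t=\langle s_0\rangle\smallfrown(s';t)$ otherwise. Equivalently, $(s;t)=s$ when $\neg(\langle\surd\rangle$ in $s)$, and $(s;t)=s_1\smallfrown t$ when $s=s_1\smallfrown\langle\surd\rangle\smallfrown s_2$ is the factorisation of $s$ at its \emph{first} occurrence of $\surd$. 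The bridge between the two forms is a decomposition lemma: if $\langle\surd\rangle$ in $s$, then $s=s_1\smallfrown\langle\surd\rangle\smallfrown s_2$ for some $s_1,s_2$ with $\neg(\langle\surd\rangle$ in $s_1)$, with $s_1$ unique; I would prove it by induction on $\sharp s$ using Head/Tail L3 (or by taking a shortest prefix witness for the ``in'' relation of Ordering L7). I would also note two closure facts used repeatedly below: a catenation of $\surd$-free traces is $\surd$-free, and any prefix of a $\surd$-free trace is $\surd$-free.

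Given the definition, most laws are short. L5 and L6 are, respectively, the base case and the $s_0=\surd$ case of the recursion. L1 follows by induction on $\sharp s$: for $s\neq\langle\rangle$ with $\neg(\langle\surd\rangle$ in $s)$ we have $s_0\neq\surd$ and $\neg(\langle\surd\rangle$ in $s')$, so $s;t=\langle s_0\rangle\smallfrown(s';t)=\langle s_0\rangle\smallfrown s'=s$ by the inductive hypothesis and Head/Tail L3. L2 and L2A also go by induction on $\sharp s$: peel non-$\surd$ heads off $s\smallfrown\langle\surd\rangle$ (resp.\ $s\smallfrown\langle\surd\rangle\smallfrown u$) until the leading $\surd$ is exposed, whereupon the $\surd$-clause returns $t$ (discarding the trailing $u$ in L2A), and reattaching the peeled heads yields $s\smallfrown t$. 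For L7, the side condition $\neg(\langle\surd\rangle$ in $(\overline s)')$ says, via Reversal L1--L4, that $\surd$ can occur in $s$ only as its last symbol; so either $\neg(\langle\surd\rangle$ in $s)$ and L1 gives $s;\langle\surd\rangle=s$, or $s=s_1\smallfrown\langle\surd\rangle$ with $s_1$ $\surd$-free and L2 gives $s;\langle\surd\rangle=s_1\smallfrown\langle\surd\rangle=s$.

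The monotonicity laws go by cases on where the first $\surd$ lies. For L4A: if $\neg(\langle\surd\rangle$ in $u)$ then $u;s=u=u;t$ and $\leq$ holds by reflexivity (Ordering L2); otherwise $u=u_1\smallfrown\langle\surd\rangle\smallfrown u_2$ with $u_1$ $\surd$-free, so $u;s=u_1\smallfrown s$, $u;t=u_1\smallfrown t$, and $s\leq t$ gives $u_1\smallfrown s\leq u_1\smallfrown t$ by Ordering L10. For L4B, write $t=s\smallfrown v$ and split on whether $\surd$ is in $s$: if $s=s_1\smallfrown\langle\surd\rangle\smallfrown s_2$ then $s;u$ and $t;u$ both reduce to $s_1\smallfrown u$ by L2A, hence are equal; if $s$ is $\surd$-free then $s;u=s$, while $t;u$ equals either $t=s\smallfrown v$ (when $v$ is $\surd$-free) or $s\smallfrown v_1\smallfrown u$ (when $v=v_1\smallfrown\langle\surd\rangle\smallfrown v_2$ with $v_1$ $\surd$-free), and in each subcase $s$ is a prefix of $t;u$, so $s;u\leq t;u$. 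Finally L3: decompose $s$ at its first $\surd$; if $s$ is $\surd$-free then $s;(t;u)=s=s;u=(s;t);u$; if $s=s_1\smallfrown\langle\surd\rangle\smallfrown s_2$ with $s_1$ $\surd$-free, further decompose $t$ at its first $\surd$ and reduce both sides, using L1, L2A and the closure facts, to $s_1\smallfrown t$ (when $t$ is $\surd$-free) or to $s_1\smallfrown t_1\smallfrown u$ (when $t=t_1\smallfrown\langle\surd\rangle\smallfrown t_2$ with $t_1$ $\surd$-free).

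I expect the main obstacle to be the bookkeeping around this first-$\surd$ structure rather than any single deep step. Concretely the two delicate points are: (i) proving the decomposition lemma cleanly and checking that it behaves well under catenation --- e.g.\ that when $s_1$ is $\surd$-free the first $\surd$ of $s_1\smallfrown t$ is the first $\surd$ of $t$, which is what makes the reductions in L2A and L3 legitimate; and (ii) the nested case analysis in associativity L3, where each combination ``$\surd$ in $s$?'' / ``$\surd$ in $t$?'' must be driven to the same normal form without overlooking a branch. Everything else reduces to routine induction on trace length, exactly as in CSP.
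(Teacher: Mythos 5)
Your proposal is correct, and it follows the only route the paper itself indicates: the paper's entire proof is the one-sentence claim that the laws ``can be proven straightforwardly from the related definition of composition,'' which is exactly what you carry out --- formalizing the (typo-afflicted) definition by recursion on $s$, proving the first-$\surd$ decomposition lemma, and unfolding each law by induction and case analysis. In short, you supply in full the routine argument the paper merely asserts, so there is no substantive divergence to report.
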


\begin{proof}
These laws can be proven straightforwardly from the related definition of composition.
\end{proof}

\begin{definition}[Specification]
A specification of a product $P$ is a description of the way it is intended to behave, which contains free variables standing some observable behaviors. A specification is denoted by $\mathcal{S}(tr)$, where $tr$ are free variables.
\end{definition}

\begin{definition}[Satisfaction]
A product $P$ satisfies a specification $\mathcal{S}$ when $P$ meets $\mathcal{S}$, denoted by $P \textsf{ sat }\mathcal{S}$.
\end{definition}

\begin{proposition}[Laws of Specification and Satisfaction]
The laws of specification and satisfaction for concurrent traces are the same as for traces in CSP, and we retype them as follows.
\begin{itemize}
  \item L1. $P\textsf{ sat }true$
  \item L2A. If $P\textsf{ sat }\mathcal{S}$ and $P\textsf{ sat }\mathcal{T}$, then $P\textsf{ sat }(\mathcal{S}\wedge\mathcal{T})$
  \item L2. If $\forall n\bullet(P\textsf{ sat }\mathcal{S}(n))$, then $P\textsf{ sat }(\forall n\bullet\mathcal{S}(n))$, if $P$ has not relation to $n$
  \item L3. If $P\textsf{ sat }\mathcal{S}$ and $\mathcal{S}\Rightarrow\mathcal{T}$, then $P\textsf{ sat }\mathcal{T}$
  \item L4A. $STOP\textsf{ sat }(tr=\langle\rangle)$
  \item L4B. If $P\textsf{ sat }\mathcal{S}(tr)$, then $(\{c_1,\cdots, c_n\}\rightarrow P)\textsf{ sat }$ $(tr=\langle\rangle\vee (tr_0=\{c_1,\cdots,c_n\}\wedge\mathcal{S} (tr')))$
  \item L4C. If $P\textsf{ sat }\mathcal{S}(tr)$, then $(\{c_1,\cdots,c_n\}\rightarrow \{d_1,\cdots,d_m\}\rightarrow P)\textsf{ sat }$ $(tr\leq \langle \{c_1,\cdots,c_n\} ,\{d_1,\cdots,d_m\}\rangle\vee (tr\geq\langle \{c_1,\cdots,c_n\},\{d_1,\cdots,d_m\}\rangle\wedge\mathcal{S}(tr'')))$
  \item L4D. If $P\textsf{ sat }\mathcal{S}(tr)$ and $Q\textsf{ sat }\mathcal{T}(tr)$, then $(\{c_1,\cdots,c_n\}\rightarrow P\mid\{d_1,\cdots,d_m\}\rightarrow Q)\textsf{ sat }$ $(tr=\langle\rangle \vee (tr_0=\{c_1,\cdots,c_n\}\wedge \mathcal{S}(tr')) \vee (tr_0=\{d_1,\cdots,d_m\}\wedge\mathcal{T}(tr')))$
  \item L4. If $\forall \{x_1,\cdots,x_n\}:B\bullet(P(x_1,\cdots,x_n)\textsf{ sat }\mathcal{S}(tr,x_1,\cdots,x_n))$, then $(\{x_1,\cdots,x_n\}:B\rightarrow P(x_1,\cdots,x_n))\textsf{ sat }$ $(tr=\langle\rangle \vee(tr_0\in B\wedge \mathcal{S}(tr',tr_0)))$
  \item L5. If $P\textsf{ sat }\mathcal{S}(tr)$ and $s\in traces(P)$, then $P/s\textsf{ sat }\mathcal{S}(s\smallfrown tr)$
  \item L6. If $F(X)$ is guarded and $STOP\textsf{ sat }\mathcal{S}$ and $((X\textsf{ sat }\mathcal{S})\Rightarrow(F(X)\textsf{ sat }\mathcal{S}))$, then $\mu X\bullet F(X)\textsf{ sat }\mathcal{S}$
\end{itemize}
\end{proposition}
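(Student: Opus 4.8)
The plan is to work from the reading of satisfaction that is implicit in the definitions and explicit in CSP, namely that $P\textsf{ sat }\mathcal{S}(tr)$ holds exactly when every concurrent trace of $P$ satisfies $\mathcal{S}$, i.e. $\forall tr\bullet(tr\in traces(P)\Rightarrow\mathcal{S}(tr))$. Once this is made explicit, L1, L2A, L2 and L3 are pure first-order logic: L1 holds because $true$ holds of every trace; L2A and L2 because a universally quantified implication distributes over conjunction and over $\forall n$ (the side condition in L2 guarantees $traces(P)$ does not depend on $n$); and L3 by transitivity of implication under the quantifier.

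The middle group L4A--L4D and L4 are then obtained by substituting the relevant clause of the Laws of concurrent Traces of a Process and simplifying. For L4A, $traces(STOP)=\{\langle\rangle\}$ forces $tr=\langle\rangle$ for every trace. For L4B, the law $traces(\{c_1,\cdots,c_n\}\rightarrow P)=\{\langle\rangle\}\cup\{\langle\{c_1,\cdots,c_n\}\rangle\smallfrown t\mid t\in traces(P)\}$ shows each trace is either empty or of the form $\langle\{c_1,\cdots,c_n\}\rangle\smallfrown t$ with $t\in traces(P)$; the hypothesis $P\textsf{ sat }\mathcal{S}(tr)$ then yields $\mathcal{S}(t)$, and the Laws of Head and Tail identify $tr'=t$ and $tr_0=\{c_1,\cdots,c_n\}$. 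L4C follows by applying L4B twice, reading $tr''=(tr')'$ and using the Laws of Ordering to repackage the two resulting cases as $tr\leq\langle\{c_1,\cdots,c_n\},\{d_1,\cdots,d_m\}\rangle$ or $tr\geq\langle\{c_1,\cdots,c_n\},\{d_1,\cdots,d_m\}\rangle$. L4D and L4 are the same argument using the trace laws for binary choice and for the general menu $\{x_1,\cdots,x_n\}:B\rightarrow P(x_1,\cdots,x_n)$. L5 uses the after-law $traces(P/s)=\{t\mid s\smallfrown t\in traces(P)\}$: if $\mathcal{S}$ holds of every trace of $P$ and $t\in traces(P/s)$, then $s\smallfrown t\in traces(P)$, so $\mathcal{S}(s\smallfrown t)$ holds, which is exactly $P/s\textsf{ sat }\mathcal{S}(s\smallfrown tr)$.

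The one law that is not a mere substitution is L6, the rule of recursion induction, and this is where the real work lies. The approach is: from $traces(\mu X\bullet F(X))=\bigcup_{n\geq 0}traces(F^n(STOP_A))$ it suffices to show $F^n(STOP_A)\textsf{ sat }\mathcal{S}$ for every $n$, because a predicate of the form $\forall tr\bullet(tr\in(\cdot)\Rightarrow\mathcal{S}(tr))$ holds of a union of trace sets iff it holds of each member of the union. One then proves $F^n(STOP_A)\textsf{ sat }\mathcal{S}$ by induction on $n$: the base case $n=0$ is the hypothesis $STOP\textsf{ sat }\mathcal{S}$ (with alphabet $A$), and the inductive step instantiates $(X\textsf{ sat }\mathcal{S})\Rightarrow(F(X)\textsf{ sat }\mathcal{S})$ at $X=F^n(STOP_A)$.

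The main obstacle is the step ``holds of every $F^n(STOP_A)$ $\Rightarrow$ holds of the union''. This is legitimate here only because satisfaction is a safety-style universal statement over finite traces; for a fully general $\mathcal{S}$ one would need admissibility (closure of $\mathcal{S}$ under limits of the guarded chain), exactly as in CSP. I would either restrict L6 to such $\mathcal{S}$, as CSP does, or justify it directly by observing that guardedness of $F$ makes each trace of $\mu X\bullet F(X)$ already appear in some $F^n(STOP_A)$, so that the finite-trace union argument goes through without further hypotheses. Apart from this point, every step reduces to the trace laws already established.
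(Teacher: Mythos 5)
Your proposal is correct, and it supplies in full what the paper itself leaves implicit: the paper's entire proof is the single sentence that the laws ``can be proven straightforwardly from the related definitions of specification and satisfaction,'' with no unfolding of what satisfaction means or how the trace laws enter. Your reading of $P\textsf{ sat }\mathcal{S}(tr)$ as a universal statement over $traces(P)$, the reduction of L1--L3 to first-order logic, and the derivation of L4A--L4D, L4 and L5 by substituting the corresponding laws of $traces(\cdot)$ and of after, is exactly the argument the paper is gesturing at (it is the CSP proof transplanted to concurrent traces), so in that sense you take the same route but actually walk it. Where you genuinely go beyond the paper is L6: you correctly identify that recursion induction is not a mere substitution, reduce it via $traces(\mu X\bullet F(X))=\bigcup_{n\geq 0}traces(F^n(STOP_A))$ to an induction on $n$, and flag the admissibility point --- that passing from ``holds of every $F^n(STOP_A)$'' to ``holds of the union'' is legitimate because satisfaction here is a universal (safety-style) predicate over finite traces, each of which already lies in some finite approximation by guardedness. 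The paper silently inherits this subtlety from CSP without comment, so your caveat is an improvement in rigor rather than a deviation; nothing in your argument conflicts with the paper, and no step would fail.
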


\begin{proof}
These laws can be proven straightforwardly from the related definitions of specification and satisfaction.
\end{proof}

\section{Concurrency}\label{concurrency}

In this section, we will inspect concurrency for CCP, including interaction and concurrency.

\subsection{Interaction}

\begin{definition}[Interaction]
Two processes interact with each other, and with the same alphabet, denoted $P\parallel Q$.
\end{definition}

\begin{proposition}[Laws of Interaction]
The laws of interaction of CCP are almost the same as those of CSP, we retype them as follows.

\begin{itemize}
  \item L1. $P\parallel Q = Q\parallel P$
  \item L2. $P\parallel (Q\parallel R)=(P\parallel Q)\parallel R$
  \item L3A. $P\parallel STOP_{\alpha P}=STOP_{\alpha P}$
  \item L3B. $P\parallel RUN_{\alpha P}=P$
  \item L4A. $(c\rightarrow P)\parallel (c\rightarrow Q)=(c\rightarrow(P\parallel Q))$
  \item L4B. $(c\rightarrow P)\parallel (d\rightarrow Q)=STOP$, if $c\neq d$
  \item L4. $(x:A\rightarrow P(x))\parallel (y:B\rightarrow Q(y))=(z:(A\cap B)\rightarrow (P(z)\parallel Q(z)))$
\end{itemize}
\end{proposition}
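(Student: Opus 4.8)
The plan is to reduce every law to a statement about trace sets by means of the identity
\[
traces(P \parallel Q) = traces(P) \cap traces(Q),
\]
which holds whenever $\alpha P = \alpha Q$ and which is the natural reading of the informal definition of $\parallel$: two processes sharing an alphabet must agree on every event each performs. I would either take this equation as the defining equation of $\parallel$ or derive it from an operational presentation. Before using it I would check that its right-hand side really determines a process over $\alpha P$: by L6 of the laws of concurrent traces of a process $\langle\rangle \in traces(P) \cap traces(Q)$, so the set is non-empty, and by L7 it is prefix-closed, hence a legitimate trace set. Granting this, establishing $P \parallel Q = R$ amounts to checking equality of alphabets together with equality of trace sets.

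Laws L1 and L2 are then immediate: $traces(P)\cap traces(Q) = traces(Q)\cap traces(P)$ by commutativity of intersection, and the triple intersections coincide by associativity, with all alphabets equal throughout. For L3A I would compute $traces(P \parallel STOP_{\alpha P}) = traces(P) \cap \{\langle\rangle\} = \{\langle\rangle\} = traces(STOP_{\alpha P})$, the middle step using $\langle\rangle \in traces(P)$ again. For L3B I would use L8, namely $traces(P) \subseteq (\alpha P)^\ast = traces(RUN_{\alpha P})$, so that $traces(P) \cap traces(RUN_{\alpha P}) = traces(P)$.

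The substantive case is L4, with L4A and L4B as specializations. Writing $\Sigma$ for the common alphabet and using L4 of the laws of concurrent traces of a process to expand each guarded prefix, I would intersect
\[
\{t \mid t=\langle\rangle \vee (t_0 \in A \wedge t' \in traces(P(t_0)))\}
\]
with the analogous set built from $B$ and $Q$, and reorganize the result as
\[
\{t \mid t=\langle\rangle \vee (t_0 \in A\cap B \wedge t' \in traces(P(t_0)) \cap traces(Q(t_0)))\};
\]
since each branch $P(t_0), Q(t_0)$ still carries alphabet $\Sigma$, the tail condition equals $t' \in traces(P(t_0)\parallel Q(t_0))$ by the $\parallel$-identity, and the whole set is exactly $traces\bigl(z:(A\cap B)\rightarrow (P(z)\parallel Q(z))\bigr)$. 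Taking $A$ and $B$ to be a common singleton yields L4A; taking them to be distinct singletons gives $A\cap B = \emptyset$, and then $z:\emptyset\rightarrow(\cdots)$ has trace set $\{\langle\rangle\}$, i.e.\ it is $STOP_\Sigma$, which is L4B. The step needing the most care is \emph{the reorganization of the intersection}: one must verify that for non-empty $t$ the head conditions and the tail conditions decouple cleanly, so that $t$ lies in both expanded sets precisely when $t_0 \in A\cap B$ and $t' \in traces(P(t_0))\cap traces(Q(t_0))$, and that the degenerate case $A\cap B=\emptyset$ collapses to $STOP_\Sigma$ exactly. I expect this bookkeeping, not any conceptual difficulty, to be the main obstacle; the remaining laws are one-line consequences of set algebra together with the trace laws already established. (If instead one defined $\parallel$ operationally and wished to recover the trace identity, one would additionally need $\parallel$ to be continuous in order to lift the laws through the $\mu$-operator; with the trace-set definition no fixed-point argument is needed, since L1--L4 are plain identities between trace sets valid for arbitrary processes.)
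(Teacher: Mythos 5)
Your proposal is correct, but it is worth noting that the paper itself offers no argument at all here: its proof is the single sentence that the laws ``can be proven straightforwardly from the related definition of interaction,'' and that definition is purely informal (``two processes interact with each other, and with the same alphabet''). What you have done is supply the missing semantic foundation in the standard Hoare style: take $\alpha(P\parallel Q)=\alpha P=\alpha Q$ and $traces(P\parallel Q)=traces(P)\cap traces(Q)$ as the defining data in the deterministic (trace) model, check that the intersection is a legitimate trace set via L6 and L7, and then reduce each law to set algebra plus the trace laws (L8 for $RUN$, the expansion law L4 for guarded choice, and the clean head/tail decoupling for the intersection). Your calculations for L1--L3B and the reorganization argument for L4, with L4A and L4B as the singleton and disjoint-singleton specializations collapsing to $STOP$, are all sound, and they go through unchanged for CCP's set-valued trace elements since nothing in the argument depends on what the symbols in a trace position are. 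The one point to flag is an ordering issue relative to the paper: the identity $traces(P\parallel Q)=traces(P)\cap traces(Q)$ appears in the paper only as L1 of the \emph{subsequent} proposition on traces of the interaction operator, so if you cite it as a law you risk circularity; your explicit decision to treat it as the definition of $\parallel$ (or to derive it from an operational presentation first) is exactly the right way to resolve this, and it is what the deterministic model of CSP does. In short, you have written the proof the paper gestures at but does not give, and the only care needed is to state clearly which of the two propositions is definitional and which is derived.
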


\begin{proof}
These laws can be proven straightforwardly from the related definition of interaction.
\end{proof}

For the implementation of interaction operator $\parallel$ by LISP, we omit it.

\begin{proposition}[Laws of Traces of Interaction Operator]
The laws of traces of interaction operator are as follows.
\begin{itemize}
  \item L1. $traces(P\parallel Q)=traces(P)\cap traces(Q)$
  \item L2. $(P\parallel Q)/s=(P/s)\parallel(Q/s)$
\end{itemize}
\end{proposition}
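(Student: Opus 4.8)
The plan is to prove L1 first and then obtain L2 by a short induction that reuses L1. Throughout I would take the algebraic laws of interaction --- in particular the general law L4, $(\{x_1,\cdots,x_n\}{:}A\rightarrow P)\parallel(\{y_1,\cdots,y_m\}{:}B\rightarrow Q)=(\{z_1,\cdots,z_k\}{:}(A\cap B)\rightarrow(P\parallel Q))$, with $STOP$ recovered as the $A=\{\}$ instance --- together with the recursion law L5 for $traces$ as the effective definition of $P\parallel Q$, since the informal Definition of interaction is not directly usable; pinning down this identification is really the first step.

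For L1 I would show $s\in traces(P\parallel Q)\iff s\in traces(P)\wedge s\in traces(Q)$ by induction on the length $\sharp s$ (Laws of Length). The base case $s=\langle\rangle$ is immediate from $\langle\rangle\in traces(R)$ for every process $R$ (Laws of concurrent Traces of a Process, L6). For the inductive step write $s=\langle\{c_1,\cdots,c_n\}\rangle\smallfrown t$ and put $P,Q$ in prefix-choice form $P=(\{x_1,\cdots,x_n\}{:}A\rightarrow P(x_1,\cdots,x_n))$, $Q=(\{y_1,\cdots,y_m\}{:}B\rightarrow Q(y_1,\cdots,y_m))$; interaction law L4 gives $P\parallel Q=(\{z_1,\cdots,z_k\}{:}(A\cap B)\rightarrow(P(z_1,\cdots,z_k)\parallel Q(z_1,\cdots,z_k)))$, so by trace law L4 for prefix choice, $s\in traces(P\parallel Q)$ iff $\{c_1,\cdots,c_n\}\in A\cap B$ and $t\in traces\big(P(\{c_1,\cdots,c_n\})\parallel Q(\{c_1,\cdots,c_n\})\big)$. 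Applying the induction hypothesis to the shorter trace $t$ rewrites the last conjunct as $t\in traces(P(\{c_1,\cdots,c_n\}))\wedge t\in traces(Q(\{c_1,\cdots,c_n\}))$; regrouping and using trace law L4 in the reverse direction on each factor recovers $s\in traces(P)\wedge s\in traces(Q)$.

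For L2 I would first reduce to a single step via the After laws: $P/(s\smallfrown u)=(P/s)/u$ (L2) and $P/\langle\rangle=P$ (L1) let one induct on $s$, so it suffices to prove $(P\parallel Q)/\langle\{c_1,\cdots,c_n\}\rangle=(P/\langle\{c_1,\cdots,c_n\}\rangle)\parallel(Q/\langle\{c_1,\cdots,c_n\}\rangle)$ whenever the left-hand side is defined, i.e. (by L1) whenever $\langle\{c_1,\cdots,c_n\}\rangle\in traces(P)\cap traces(Q)$, equivalently $\{c_1,\cdots,c_n\}\in A\cap B$. With $P,Q$ in the normal form above, interaction law L4 again gives $P\parallel Q=(\{z_1,\cdots,z_k\}{:}(A\cap B)\rightarrow(P(z_1,\cdots,z_k)\parallel Q(z_1,\cdots,z_k)))$, and After law L3 used once on the left and once on each factor on the right both evaluate to $P(\{c_1,\cdots,c_n\})\parallel Q(\{c_1,\cdots,c_n\})$; the general case follows by feeding $P/s$ and $Q/s$ back into this identity.

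The step I expect to be the real obstacle is the recursion case concealed in ``put $P$ in prefix-choice form'': when $P=\mu X{:}A\bullet F(X)$, one needs $traces(\mu X\bullet F(X)\parallel Q)=\bigcup_{n\geq 0}traces(F^n(STOP_A)\parallel Q)$, i.e. continuity of $\parallel$, which is not among the retyped laws. I would discharge it as CSP does: unfold via process law L2A, use that $F$ is guarded so the behaviour of $\mu X\bullet F(X)$ agrees with that of $F^n(STOP_A)$ on all traces of length below $n$, and combine with trace law L5 --- the length induction set up for L1 is exactly what makes this finite-approximation bookkeeping go through. The remaining cases ($STOP$ and binary choice) are then routine instances of L4 with $A=\{\}$ and of trace laws L1--L3.
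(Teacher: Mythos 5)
Your proposal is correct in outline, but it takes the opposite logical direction from the one the paper (following CSP) intends, so the comparison is worth spelling out. The paper's own ``proof'' is only the one-line appeal to ``the related definitions of interaction and concurrent traces,'' and its section on the mathematical theory of deterministic processes states that the denotational semantics of the interaction operator is taken over unchanged from CSP; there, for processes with the same alphabet, $traces(P\parallel Q)=traces(P)\cap traces(Q)$ is essentially the \emph{definition} of $\parallel$, L2 then falls out from the trace-set characterisation of $P/s$, and the algebraic laws of interaction (L1--L4 of the preceding proposition) are what get \emph{derived} from these trace identities. You instead treat the algebraic laws, in particular interaction law L4, as the effective definition and reconstruct the trace laws by induction on $\sharp s$; this is a legitimate reading, since the paper's Definition of interaction is purely informal and its algebraic laws are asserted with an equally vacuous proof, but it is the harder road. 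Its cost is exactly the two debts you flag: you need every deterministic process in ``prefix-choice normal form,'' i.e.\ the expansion $P=(\{x_1,\cdots,x_n\}{:}A\rightarrow P/\langle\{x_1,\cdots,x_n\}\rangle)$ with $A$ the set of initials, and you need continuity of $\parallel$ to handle $\mu X{:}A\bullet F(X)$ via trace law L5 and guardedness; neither is stated in the paper, and both would have to be proved (or imported from CSP) before your induction is complete. The denotational route buys L1 and L2 almost for free and is presumably what the author means by ``straightforwardly''; your route buys independence from the unstated semantic model at the price of the normal-form and continuity lemmas, which you have correctly identified as the genuine content of the argument rather than glossed over.
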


\begin{proof}
These laws can be proven straightforwardly from the related definitions of interaction and concurrent traces.
\end{proof}

\subsection{Concurrency}

\begin{definition}[Concurrency]
Concurrency is a generalization of the case of interaction, while $P$ and $Q$ may have different alphabets, and it is also denoted by $P\parallel Q$.
\end{definition}

\begin{proposition}[Laws of Concurrency]
The laws of concurrency of CCP are quite different to those of CSP, we give them as follows.

\begin{itemize}
  \item L1. $P\parallel Q = Q\parallel P$
  \item L2. $P\parallel (Q\parallel R)=(P\parallel Q)\parallel R$
  \item L3A. $P\parallel STOP_{\alpha P}=STOP_{\alpha P}$
  \item L3B. $P\parallel RUN_{\alpha P}=P$
  \item L4. $(c\rightarrow P)\parallel (c\rightarrow Q)=(c\rightarrow(P\parallel Q))$
  \item L5. $(c\rightarrow P)\parallel (d\rightarrow Q)=(\{c,d\}\rightarrow(P\parallel Q))$, if $c\neq d$
  \item L6. $(x:A\rightarrow P(x))\parallel (y:B\rightarrow Q(y))=(z:(A\cup B)\rightarrow (P\parallel Q))$, $z=x=y$, otherwise, z=\{x,y\}
\end{itemize}
\end{proposition}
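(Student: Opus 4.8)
The plan is to work in the traces model and to use throughout the principle --- inherited from CSP and implicit in this section --- that two processes are equal precisely when they have the same set of concurrent traces; each law then reduces to an identity between trace sets, provable with the laws of catenation, restriction, star, head/tail, and concurrent composition already established above. The first task is to make precise the trace semantics of the concurrency operator that is only described informally in Definition (Concurrency): a concurrent trace $s$ lies in $traces(P\parallel Q)$ exactly when $s\in(\alpha P\cup\alpha Q)^\ast$, $s\upharpoonright\alpha P\in traces(P)$, $s\upharpoonright\alpha Q\in traces(Q)$, and each step-set of $s$ is the union of the corresponding step-sets contributed by $P$ and by $Q$, consistently on the shared alphabet $\alpha P\cap\alpha Q$ --- this is the natural lifting of the relation $\textsf{ cc }$ to overlapping alphabets, and L1 of the Laws of Traces of Interaction Operator is its special case $\alpha P=\alpha Q$. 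Everything else is a calculation against this characterization.

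The easy laws come first. L1 holds because the defining condition on $s$ is visibly symmetric in the pair $(P,\alpha P)$, $(Q,\alpha Q)$, so $traces(P\parallel Q)=traces(Q\parallel P)$. For L3A, $traces(STOP_{\alpha P})=\{\langle\rangle\}$ and $\alpha(STOP_{\alpha P})=\alpha P$ force the $STOP$-component $s\upharpoonright\alpha P$ of any $s\in traces(P\parallel STOP_{\alpha P})$ to be $\langle\rangle$; but $s\in(\alpha P)^\ast$ then forces $s=\langle\rangle$, so $traces(P\parallel STOP_{\alpha P})=\{\langle\rangle\}=traces(STOP_{\alpha P})$. For L3B, $traces(RUN_{\alpha P})=(\alpha P)^\ast$ adds no constraint, and since the shared alphabet is all of $\alpha P$ we get $s=s\upharpoonright\alpha P\in traces(P)$, so $traces(P\parallel RUN_{\alpha P})=traces(P)$. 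In both cases trace-equality yields process-equality.

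Laws L4, L5, L6 are a single law --- L6 --- with its two degenerate cases, all obtained by unfolding a menu prefix once. For $P=(x:A\rightarrow P(x))$, $Q=(y:B\rightarrow Q(y))$, I would compute $traces(P\parallel Q)$ via Law L4 of the Laws of concurrent Traces of a Process: a nonempty trace has some admissible head $z$ and a tail in $traces(P(\cdot)\parallel Q(\cdot))$, where the admissible heads are exactly the step-sets formed from an event $x\in A$ offered by $P$ together with an event $y\in B$ offered by $Q$ --- collapsing to the single event when $x=y$ and equal to $\{x,y\}$ otherwise --- and the tail then behaves as $P\parallel Q$ with the arguments instantiated. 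Matching this against $traces\bigl(z:(A\cup B)\rightarrow(P\parallel Q)\bigr)$, again via Law L4, gives L6; taking $A=B=\{c\}$ collapses the menu to the single head $z=c$ and gives L4; taking $A=\{c\}$, $B=\{d\}$ with $c\neq d$ leaves the single head $z=\{c,d\}$ and gives L5.

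The one step requiring real care is L2, associativity, which I would prove by showing that both $traces((P\parallel Q)\parallel R)$ and $traces(P\parallel(Q\parallel R))$ equal the set of $s\in(\alpha P\cup\alpha Q\cup\alpha R)^\ast$ with $s\upharpoonright\alpha P\in traces(P)$, $s\upharpoonright\alpha Q\in traces(Q)$, $s\upharpoonright\alpha R\in traces(R)$ and each step-set of $s$ the union of three pairwise-consistent step-sets of $P$, $Q$, $R$. Expanding the left side, a step of $(P\parallel Q)\parallel R$ is a step of $P\parallel Q$ unioned with a consistent step of $R$, and the former splits further into consistent steps of $P$ and $Q$; the substance of the proof is that this nested pairwise consistency --- on $\alpha P\cap\alpha Q$, then on $(\alpha P\cup\alpha Q)\cap\alpha R$ --- is equivalent to the flat symmetric condition of pairwise consistency among all three alphabets, which is exactly where the laws of restriction (how $\upharpoonright$ interacts with $\cup$ and $\cap$) and the associativity of $\textsf{ cc }$ are invoked; the right side yields the same set by symmetry. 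I expect the bookkeeping of which events are shared between which pair, and checking that the restriction identity invoked is the one recorded in the Laws of Restriction, to be the only genuine obstacle --- the rest is routine unfolding.
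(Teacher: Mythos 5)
The core of your plan---equate processes by their sets of concurrent traces and verify each law as a trace-set identity---is reasonable, and it is in fact more than the paper offers (its proof is a one-line appeal to the informal definition of concurrency). But there is a genuine gap at exactly the point where CCP differs from CSP. Your proposed characterization of $traces(P\parallel Q)$ consists of projection conditions ($s\in(\alpha P\cup\alpha Q)^\ast$, $s\upharpoonright\alpha P\in traces(P)$, $s\upharpoonright\alpha Q\in traces(Q)$) plus an unformalized clause that ``each step-set of $s$ is the union of the corresponding step-sets contributed by $P$ and by $Q$.'' The projection conditions alone do \emph{not} yield L5 or L6: take $\alpha P=\{c\}$, $\alpha Q=\{d\}$, $c\neq d$, and $s=\langle\{c\}\rangle$. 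Then $s\upharpoonright\alpha P=\langle\{c\}\rangle\in traces(c\rightarrow P)$ and $s\upharpoonright\alpha Q=\langle\rangle\in traces(d\rightarrow Q)$, so $s$ satisfies all the projection conditions, yet $s\notin traces(\{c,d\}\rightarrow(P\parallel Q))$, whose nonempty traces must begin with $\{c,d\}$. So the law L5 you must prove is carried entirely by the vague ``corresponding step-sets'' clause, which has to impose a lockstep (maximal-synchrony) discipline: at every step both components that are still able to move must move. That requirement is not a consequence of the Laws of Restriction, it is not stated precisely in your sketch, and it is in tension with your projection clauses (restriction deletes whole steps lying outside an alphabet, so the step indices of $s\upharpoonright\alpha P$ and $s$ no longer align, and one must also decide what happens once one component's trace is exhausted, cf.\ laws L1--L2 of $\textsf{cc}$). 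Until that semantics is fixed, L5, L6 and the associativity argument for L2 cannot actually be carried out; different disambiguations give different answers (the pure projection reading validates CSP-style interleaving-with-synchronization and falsifies L5).

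Two further points you should flag explicitly. First, the paper's Laws of Restriction (L3, L4) only cover step-sets lying entirely inside or entirely outside $A$; your calculations need to restrict mixed steps such as $\{c,d\}$ to $\alpha P$, so you are silently extending the definition beyond the recorded laws. Second, L5 as stated needs an alphabet side-condition ($c\notin\alpha Q$, $d\notin\alpha P$): under any synchronization-respecting semantics consistent with the interaction law L4B, a shared event that only one component offers initially cannot occur, so the law fails without that hypothesis. Your proof should either add the hypothesis or state the intended convention; as written, the lockstep semantics you need for L5 in the disjoint case and the consistency condition you need on shared alphabets have not been reconciled.
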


\begin{proof}
These laws can be proven straightforwardly from the related definition of concurrency.
\end{proof}

The concurrency operator also can be implemented by LISP, we omit the implementation.

\begin{proposition}[Laws of Traces of Concurrency Operator]
The laws of traces of concurrency operator are as follows.
\begin{itemize}
  \item L1. $traces(P\parallel Q)=traces(P)\cap traces(Q)$, if $\alpha P=\alpha Q$
  \item L2. $traces(P\parallel Q)=traces(P)\textsf{ cc } traces(Q)$, if $\alpha P\cap\alpha Q=\{\}$
  \item L3. $(P\parallel Q)/s=(P/(s\upharpoonright\alpha P))\parallel(Q/(s \upharpoonright\alpha Q))$
  \item L4. $traces(P\parallel Q)=\{t|t\in(\alpha P\cup\alpha Q)^\ast\}$
\end{itemize}
\end{proposition}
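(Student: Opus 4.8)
The plan is to treat the four laws one at a time, in each case proceeding by structural induction on the process terms (equivalently, by induction on the length of the trace) and invoking the Laws of Concurrency together with the Laws of Concurrent Traces of a Process. For \textbf{L1} (the case $\alpha P=\alpha Q$) this is exactly the interaction setting, so it already follows from L1 of the Laws of Traces of Interaction Operator; alternatively I would reprove it directly, using concurrency law L4, $(c\rightarrow P)\parallel(c\rightarrow Q)=(c\rightarrow(P\parallel Q))$, and the general law L6 with $z=x=y$, which is forced since $A=B=\alpha P=\alpha Q$ leaves no room for a compound step. The base case reduces $traces(STOP\parallel STOP)$ to $\{\langle\rangle\}$ via concurrency law L3A and L1 of the trace laws; in the inductive step a trace $t$ of $(x:A\rightarrow P(x))\parallel(y:A\rightarrow Q(y))$ is $\langle\rangle$ or has $t_0=z\in A$ and $t'\in traces(P(z)\parallel Q(z))$, and the inductive hypothesis together with L2/L4 of the trace laws gives $t\in traces(P)\cap traces(Q)$, and conversely.

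For \textbf{L2} (the case $\alpha P\cap\alpha Q=\{\}$) no event is shared, so concurrency law L5, $(c\rightarrow P)\parallel(d\rightarrow Q)=(\{c,d\}\rightarrow(P\parallel Q))$, always applies, and more generally law L6 always takes the compound branch since $x\in\alpha P$ and $y\in\alpha Q$ can never coincide. I would show by induction on trace length that $t\in traces(P\parallel Q)$ precisely when $t$ is the element-wise merge of some $s\in traces(P)$ with some $u\in traces(Q)$ — which is exactly the relation cut out by the Laws of Concurrent Composition: L1 and L2 cover the cases where one of $s,u$ is exhausted, while L3 realizes the simultaneous step $\langle\{c,d\}\rangle\smallfrown(s\textsf{ cc }u)$. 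Matching this recursion against the recursion for $traces(P\parallel Q)$ read off from the concurrency laws closes the induction and yields $traces(P\parallel Q)=traces(P)\textsf{ cc }traces(Q)$.

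For \textbf{L3} I would induct on $\sharp s$. The base case $s=\langle\rangle$ is After law L1 together with restriction law L1, giving $(P\parallel Q)/\langle\rangle=P\parallel Q=(P/\langle\rangle)\parallel(Q/\langle\rangle)$. For the step, put $s=\langle z\rangle\smallfrown s'$; After law L2 gives $(P\parallel Q)/s=((P\parallel Q)/\langle z\rangle)/s'$, After law L3 together with the relevant concurrency law (L4 if $z$ is a shared single event, L5 or L6 if $z=\{x,y\}$ is one event from each side) rewrites $(P\parallel Q)/\langle z\rangle$ as $(P/(\langle z\rangle\upharpoonright\alpha P))\parallel(Q/(\langle z\rangle\upharpoonright\alpha Q))$, restriction laws L2--L4 push the restrictions through the catenation, and the inductive hypothesis finishes the argument. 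Finally, for \textbf{L4}, the inclusion $traces(P\parallel Q)\subseteq(\alpha P\cup\alpha Q)^\ast$ is immediate from L8 of the trace laws once one notes $\alpha(P\parallel Q)=\alpha P\cup\alpha Q$, and the reverse inclusion follows from L2 (together with L1 on the shared part) by splitting any $t\in(\alpha P\cup\alpha Q)^\ast$ along the restrictions $t\upharpoonright\alpha P$ and $t\upharpoonright\alpha Q$.

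The step I expect to be the main obstacle is \textbf{L2}: the Laws of Concurrent Composition are stated only for traces with singleton heads $\langle x\rangle\smallfrown s$, whereas traces of CCP processes generally carry set-valued heads, so I must first extend or reinterpret $\textsf{ cc }$ on set-headed traces (or argue that, in the disjoint-alphabet regime actually needed here, every head effectively reduces to the singleton case), and handle with care the asymmetry in L1/L2 of concurrent composition when $traces(P)$ and $traces(Q)$ contain traces of unequal length. A secondary subtlety is that L4 as literally written is an equality only when $P$ and $Q$ are already prefix-closed over their entire alphabets; in general only the inclusion $\subseteq$ holds, and I would flag this explicitly in the write-up.
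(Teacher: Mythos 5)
The paper offers no argument here beyond the one-line claim that the laws follow straightforwardly from the definitions, so your proposal cannot be measured against an actual proof in the text; what you have written is essentially the expansion the paper waves at (induction over the prefix/choice structure of the processes, or on trace length, using the Laws of Concurrency together with the trace laws L1--L8), and your treatment of L1 via the interaction case and of L3 via After laws L1--L2 plus restriction laws is the natural way to carry it out. Where your write-up genuinely adds value is in the two obstacles you flag, both of which are real defects of the statement rather than of your argument: the paper's L4 drops the conjuncts $(t\upharpoonright\alpha P)\in traces(P)$ and $(t\upharpoonright\alpha Q)\in traces(Q)$ that appear in the corresponding CSP law, so as written only the inclusion $traces(P\parallel Q)\subseteq(\alpha P\cup\alpha Q)^\ast$ is provable (take $P=STOP_{\{a\}}$, $Q=STOP_{\{b\}}$ for a counterexample to equality), and your proposed recovery of the reverse inclusion from L2 cannot work in general; likewise $\textsf{ cc }$ is defined in the paper only on sequences with singleton heads and on individual traces, so before L2 can even be stated you must extend it to set-valued heads and lift it pointwise to sets of traces, exactly as you say.

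Two smaller points you should add to make the sketch complete. First, the concurrency laws L4--L6 only cover prefix and menu-choice forms, so your structural induction needs a separate clause for recursion, via L5 of the Laws of Concurrent Traces of a Process, $traces(\mu X:A\bullet F(X))=\bigcup_{n\geq 0}traces(F^n(STOP_A))$, reducing the recursive case to the finite approximations. Second, L3 is only meaningful under the side condition $s\in traces(P\parallel Q)$ (so that $s\upharpoonright\alpha P\in traces(P)$ and $s\upharpoonright\alpha Q\in traces(Q)$, as the paper's After operator requires); you should state this hypothesis explicitly and note that in the inductive step the single-step case splits according to whether the head $z$ lies in $\alpha P\cap\alpha Q$, in $\alpha P\setminus\alpha Q$ (or symmetrically), or is a compound $\{x,y\}$ with one event from each side, with restriction laws L3--L4 handling the degenerate restrictions $\langle z\rangle\upharpoonright\alpha Q=\langle\rangle$. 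With those amendments, and with L4 weakened to an inclusion (or restated with the missing conjuncts), your proof goes through.
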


\begin{proof}
These laws can be proven straightforwardly from the related definitions of concurrency and concurrent traces.
\end{proof}

\subsection{Specification}

Let $P$ and $Q$ be processes, $tr\in traces(P\parallel Q)$, and $P\textsf{ sat }\mathcal{S}(tr)$, $Q\textsf{ sat }\mathcal{T}(tr)$. So $(tr\upharpoonright\alpha P)\in traces(P)$ 
and $P\textsf{ sat }\mathcal{S}(tr\upharpoonright\alpha P)$, and also $tr\upharpoonright\alpha Q\in traces(Q)$ and $Q\textsf{ sat }\mathcal{T}(tr\upharpoonright\alpha Q)$.

\begin{proposition}[Laws of Specification for Concurrency]
The laws of specification for concurrency operator are as follows.
\begin{itemize}
  \item L1. If $P\textsf{ sat }\mathcal{S}(tr)$, and $Q\textsf{ sat }\mathcal{T}(tr)$, then 
  $(P\parallel Q)\textsf{ sat }(\mathcal{S}(tr\upharpoonright\alpha P)\wedge \mathcal{T}(tr\upharpoonright\alpha Q))$
  \item L2. If $P$ and $Q$ never stop, then $P\parallel Q$ never stops.
\end{itemize}
\end{proposition}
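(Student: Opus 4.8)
The plan is to reduce both laws to the trace-level facts about $\parallel$ recorded just above the proposition --- that $tr\in traces(P\parallel Q)$ forces $tr\upharpoonright\alpha P\in traces(P)$ and $tr\upharpoonright\alpha Q\in traces(Q)$ --- together with the trace law $(P\parallel Q)/s=(P/(s\upharpoonright\alpha P))\parallel(Q/(s\upharpoonright\alpha Q))$ from the laws of traces of the concurrency operator. Once the meanings of $\textsf{sat}$ and of ``never stops'' are pinned down, each law becomes essentially bookkeeping.

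For L1, read $R\textsf{ sat }\mathcal{U}(tr)$ as ``$\mathcal{U}(tr)$ holds for every $tr\in traces(R)$''. Fix an arbitrary $tr\in traces(P\parallel Q)$; we must verify $\mathcal{S}(tr\upharpoonright\alpha P)\wedge\mathcal{T}(tr\upharpoonright\alpha Q)$. By the observations preceding the proposition $tr\upharpoonright\alpha P\in traces(P)$, so from $P\textsf{ sat }\mathcal{S}(tr)$ the predicate $\mathcal{S}$ holds at $tr\upharpoonright\alpha P$; symmetrically $tr\upharpoonright\alpha Q\in traces(Q)$ and $Q\textsf{ sat }\mathcal{T}(tr)$ give $\mathcal{T}(tr\upharpoonright\alpha Q)$. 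Conjoining the two and letting $tr$ range over $traces(P\parallel Q)$ yields $(P\parallel Q)\textsf{ sat }(\mathcal{S}(tr\upharpoonright\alpha P)\wedge\mathcal{T}(tr\upharpoonright\alpha Q))$. This is the exact analogue of the corresponding CSP law, with restriction playing the role of the interleaving projection.

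For L2, I would first fix the reading ``$R$ never stops'' $\equiv$ ``$traces(R/s)\neq\{\langle\rangle\}$ for every $s\in traces(R)$'', i.e.\ $R/s$ always offers some nonempty initial event set. Take $tr\in traces(P\parallel Q)$ and, using L3 of the laws of traces of the concurrency operator, write $(P\parallel Q)/tr=(P/(tr\upharpoonright\alpha P))\parallel(Q/(tr\upharpoonright\alpha Q))$. Since $P$ never stops, $P/(tr\upharpoonright\alpha P)$ has a nonempty set of initials $A$; since $Q$ never stops, $Q/(tr\upharpoonright\alpha Q)$ has a nonempty set of initials $B$. By L6 of the laws of concurrency the composition offers $A\cup B$ as its initials, and $A\cup B\supseteq A\neq\emptyset$, so $(P\parallel Q)/tr$ can engage in a nonempty event set and hence is not $STOP$. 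As $tr$ was arbitrary, $P\parallel Q$ never stops.

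The only genuine obstacle is in L2, and it is conceptual rather than computational: one must make sure it is really the concurrency operator (arbitrary alphabets, events merge) rather than the interaction operator (common alphabet, events must synchronise) that is at work, since under interaction the statement is false --- take $P=\mu X\bullet(a\rightarrow X)$ and $Q=\mu Y\bullet(b\rightarrow Y)$ with $\alpha P=\alpha Q=\{a,b\}$, where $P\parallel Q=STOP$ by L4B of the laws of interaction although neither component stops. So the argument must lean specifically on L5--L6 of the laws of concurrency, which never turn two active components into $STOP$; extracting that ``no-deadlock'' property from the stated equational laws (rather than from an underlying operational reading of $\parallel$) is the step that needs care. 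A minor loose end is to confirm that the initial event sets $A,B$ furnished by ``never stops'' may indeed be taken nonempty, which is immediate from the formalisation adopted.
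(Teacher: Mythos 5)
Your proposal is correct and follows essentially the same route as the paper, which simply asserts that both laws follow straightforwardly from the definitions of concurrency and specification together with the trace facts recorded immediately before the proposition ($tr\upharpoonright\alpha P\in traces(P)$, $tr\upharpoonright\alpha Q\in traces(Q)$) and the laws of traces of the concurrency operator. You merely fill in the routine details the paper leaves implicit, and your observation that L2 relies on the concurrency laws L5--L6 (rather than interaction, where it would fail) is a sound clarification rather than a departure from the paper's argument.
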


\begin{proof}
These laws can be proven straightforwardly from the related definitions of concurrency and specification.
\end{proof}

\subsection{Mathematical Theory of Deterministic Processes}

The laws in the above two sections are true in intuition, to make them strict, it is needed to give those denotational semantics. 
Fortunately, the denotational semantics of each operator of deterministic processes are the same as those in CSP, including the interaction operator and concurrency operator. 
And the proofs of the existence of solution for recursive equation in the law Proposition \ref{LoP}.L2 and the uniqueness of the solution by use of Scott's fix-point theory, 
are also the same as those of CSP, we omit them (please refer to CSP \cite{CSP}).

\section{Other Contents of CCP}\label{occ}

We inspect the rest of CCP corresponding to CSP, including nondeterminism, communication, etc, and we find that the rest of CCP is almost the same as the corresponding parts of CSP. So, there is not necessary to retype the quite long contents again. Indeed, the main difference between CCP and CSP is the treatment of concurrency.

\subsection{Nondeterminism}

The nondeterminism is defined by internal choice operator, general choice operator, refusals, concealment, but no interleaving operator. The nondeterminism of CCP are almost the same as that of CSP, we omit it.

\subsection{Communication}

The communication is defined by input and output, the communication of CCP is also almost the same as CSP, we also omit it.

\section{Conclusions}\label{con}

We modify CSP to CCP, and the main difference between the two algebras is the treatment of concurrency. CSP only permits a process to engage in one event on a moment and record only this one event in the process traces, while CCP permits a process to engage in multi events simultaneously and record simultaneously these events in the process traces.

\label{lastpage}


\begin{thebibliography}{Lam94}

\bibitem{CSP} C. A. R. Hoare. Communicating sequential processes. Communications of the ACM, 1985, 21(1):666-677.

\bibitem{CCS} R. Milner. A calculus of communicating systems. LNCS 92, Springer, 1980.

\bibitem{ACP} W. Fokkink. Introduction to process algebra 2nd ed. Springer-Verlag, 2007.

\end{thebibliography}
\end{document}